\newtheorem{theorem}{Theorem}
\newtheorem{definition}{Definition}
\newtheorem{lemma}{Lemma}
\newtheorem{proposition}{Proposition}
\newenvironment{proof}[1][Proof]{\textbf{#1.} }{\ \rule{0.5em}{0.5em}}
\def\@removefromreset#1#2{\let\@tempb\@elt
     \def\@tempa#1{@&#1}\expandafter\let\csname @*#1*\endcsname\@tempa
     \def\@elt##1{\expandafter\ifx\csname @*##1*\endcsname\@tempa\else
    \noexpand\@elt{##1}\fi}     \expandafter\edef\csname cl@#2\endcsname{\csname cl@#2\endcsname}     \let\@elt\@tempb
     \expandafter\let\csname @*#1*\endcsname\@undefined}
\begin{document}

\title{On the probabilistic description of a multipartite correlation
scenario with arbitrary numbers of settings and outcomes \\
per site}
\author{Elena R. Loubenets\bigskip \\
Applied Mathematics Department, Moscow State Institute \\
of Electronics and Mathematics, Moscow 109028, Russia}
\maketitle

\begin{abstract}
We consistently formalize the probabilistic description of multipartite
joint measurements performed on systems of any nature. This allows us: (1)
to specify in probabilistic terms the difference between nonsignaling, the
Einstein-Podolsky-Rosen (EPR)\ locality and Bell's locality; (2) to
introduce the notion of an LHV model \ for an $S_{1}\times ...\times S_{N}$%
-setting $N$-partite correlation experiment with outcomes of any spectral
type, discrete or continuous, and to prove both general and specifically
"quantum" statements on an LHV simulation in an arbitrary multipartite case;
(3) to classify LHV models for a multipartite quantum state, in particular,
to show that any $N$-partite quantum state, pure or mixed, admits an
arbitrary $S_{1}\times 1\times ...\times 1$-setting LHV description; (4) to
evaluate a threshold visibility for an arbitrary bipartite noisy quantum
state to admit an $S_{1}\times S_{2}$-setting LHV description under any
generalized quantum measurements of two parties.
\end{abstract}

\tableofcontents

\section{ Introduction}

The probabilistic description of quantum measurements performed by several
parties has been discussed in the literature ever since the seminal
publication [1] of Einstein, Podolsky and Rosen (EPR) in 1935. In that
paper, the authors argued that \emph{locality}\footnote{%
In [1], the Einstein-Podolsky-Rosen locality of parties' measurements is
otherwise expressed as "without in any way disturbing" systems observed by
other parties.} of measurements performed by different parties on perfectly
correlated quantum events implies the "simultaneous reality - and thus
definite values"\footnote{%
See [1], page 778.} of physical quantities described by noncommuting quantum
observables. This EPR argument, contradicting the quantum formalism [2] and
referred to as the EPR paradox, seemed to imply a possibility of a \emph{%
hidden variable} account of quantum measurements. However, the von Neumann
"no-go" theorem [2], published in 1932, was considered wholly to exclude
this possibility.

Analysing this problem in 1964-1966, Bell showed [3] that the setting of von
Neumann "no-go" theorem contains the linearity assumption, which is, in
general, unjustified, and explicitly constructed [3] the hidden variable
(HV) model reproducing the statistical properties of all quantum observables
of a qubit system. Considering, however, spin measurements of two parties on
a two-qubit quantum system in the singlet state, Bell proved [4] that \emph{%
any local} hidden variable (LHV) description of these bipartite measurements
on perfectly correlated quantum events disagrees with the statistical
predictions of quantum theory. Based on his observations in [3, 4], Bell
concluded [3] that the EPR paradox should be resolved specifically due to
the violation of \emph{locality} under multipartite quantum measurements and
that "...non-locality is deeply rooted in quantum mechanics itself and will
persist in any completion"\footnote{%
See [5], page 171.}.

In 1967, Kochen and Specker corrected [6] the setting of von Neumann "no-go"
theorem according to Bell's remark in [3] and proved [6] that, for a quantum
system described by a Hilbert space of a dimension $d\geq 3,$ there does not
exist a non-contextual hidden variable (HV) model that reproduces the
statistical properties of all quantum observables and conserves the
functional subordination between them. Specified for a tensor-product
Hilbert space, the Kochen-Specker theorem excludes the existence of the
non-contextual HV model for \emph{all} projective measurements on a
multipartite quantum state. For \emph{multipartite} projective measurements,
this HV model takes the LHV\ form.

Thus, on one hand, Bell's analysis\footnote{%
In the physical literature, Bell's analysis in [4] is referred to as \emph{%
Bell's theorem}.} in [4] does not exclude a possibility for multipartite
measurements on an\emph{\ arbitrary }nonseparable quantum state to admit an
LHV model. On the other hand, the Kochen-Specker "no-go" theorem [6] \emph{%
does not disprove }the existence for a multipartite quantum state of an LHV
model of a \emph{general} type. Therefore, Bell's analysis [4] plus the
Kochen-Specker theorem [6] do not disprove that multipartite measurements on
an \emph{arbitrary} nonseparable quantum state may admit an LHV model of a 
\emph{general} type.

In 1982, Fine [7] formalized the notion of an LHV model for a bipartite
correlation experiment (not necessarily quantum), with two settings and two
outcomes per site, and proved the main statements on an LHV simulation in
this bipartite case.

In 1989, Werner presented [8] the nonseparable bipartite quantum state on $%
\mathbb{C}^{d}\mathbb{\otimes C}^{d},$ $d\geq 2,$ that admits the LHV model
under any bipartite projective measurements performed on this state.

Ever since these seminal publications, the conceptual and mathematical
aspects of the LHV description of multipartite quantum measurements have
been analysed in a plenty of papers, see, for example, [9-15] and references
therein. The so-called Bell-type inequalities\footnote{%
A Bell-type inequality represents a linear probabilistic constraint (on
either correlation functions or joint probabilities) that holds under any
multipartite correlation experiment admitting an LHV description and may be
violated otherwise.}, specifying multipartite measurement situations
(correlation experiments) admitting an LHV description, are now widely used
in many quantum information tasks.

Nevertheless, as it has been recently noted by Gisin [15], in this field,
there are still "many questions, a few answers".

In our opinion, there\ is even still a lack in a consistent view on \emph{%
locality }under multipartite measurements on spatially separated physical
systems. For example, Werner-Wolf [11] identify \emph{locality} with \emph{%
nonsignaling }while Popescu-Rohrlich [10], Barrett-Linden-Massar-Pironio-
Popescu-Roberts [13] and Masanes-Acin-Gisin [14] specify quantum
multipartite correlations as, in general, \emph{nonlocal }and satisfying 
\emph{"the no-signaling principle"}.\emph{\ }In [12], we argue that, in
contrast to the opinion of Bell in [3, 5], under a multipartite joint
measurement on spacelike separated quantum particles, locality meant by
Einstein, Podolsky and Rosen in [1], \emph{the EPR locality}, is never
violated.

Furthermore, the notion of an LHV model is also understood differently by
different authors. For example, for a bipartite quantum state, Werner's
notion [8] of an LHV model is not equivalent to that of Fine [7] for
bipartite measurements performed on this state.

It should be also stressed that, for an arbitrary multipartite case, there
does not still exist either a consistent analysis of a possibility of an LHV
simulation or a concise analytical approach to the derivation of extreme
Bell-type inequalities for more than two outcomes per site. However, \emph{%
generalized} bipartite quantum measurements on even two qubits may have
infinitely many outcomes.

From the mathematical point of view, the necessity to analyse a possibility
of an LHV simulation arises for any multipartite correlation experiment (not
necessarily quantum), specified not in terms of a single probability space.
The latter is one of the main notions of Kolmogorov's measure-theoretical
formulation [16] of probability theory.

The aim of the present paper is to introduce a consistent frame for the
probabilistic description of a multipartite correlation experiment on
systems of \emph{any} nature and to analyse a possibility of a simulation of
such an experiment in LHV terms. The paper is organized as follows.

In sections 2, 3, we consistently formalize the probabilistic description of
multipartite joint measurements with outcomes of any spectral type, discrete
or continuous, and specify in probabilistic terms the difference between 
\emph{nonsignaling} [17], \emph{the} \emph{EPR locality }[1]\emph{\ }and 
\emph{Bell's locality }[4,\emph{\ }5]. We, in particular, prove (proposition
1) that nonsignaling does not necessarily imply the EPR locality and present
the comparative analysis with the specifications of \emph{locality} and 
\emph{nonsignaling} in [10, 11, 13-15]. The details of the probabilistic
models for the description of EPR local multipartite joint measurements on
physical systems, classical or quantum, are considered in section 3.1.

In section 4, we introduce the notion of an LHV model for an $S_{1}\times
...\times S_{N}$-setting $N$-partite correlation experiment, with outcomes
of any spectral type, discrete or continuous, and prove the general
statements (theorem 1, proposition 2) on an LHV simulation in an arbitrary
multipartite case. An LHV simulation in a general bipartite case and in a
dichotomic multipartite case is considered in theorems 2 and 3, respectively.

In section 5, we classify LHV models arising under EPR local multipartite
joint measurements on a quantum state. We introduce the notion of an $%
S_{1}\times ...\times S_{N}$-setting LHV description of an $N$-partite
quantum state, prove the main general statements (propositions 3 - 6) on
this notion and establish its relation to Werner's notion [8] of an LHV
model for a multipartite quantum state.

The main results of the present paper are summarized in section 6.

\section{Multipartite joint measurements}

Consider a measurement situation where each $n$-th of $N$ parties (players)
performs a measurement, specified by a setting $s_{n}$, and $\Lambda _{n}$
is a set of outcomes $\lambda _{n}$, not necessarily real numbers, observed
by $n$-th party (equivalently, at $n$-th site).

This measurement situation defines the joint\footnote{%
Any measurement with outcomes in a direct product set is called \emph{joint.}%
} measurement with outcomes in $\Lambda _{1}\times ...\times \Lambda _{N}$.
We call this joint measurement $N$-partite and specify it by an $N$-tuple $%
(s_{1},...,s_{N})$ of measurement settings where $n$-th argument refers to a
setting at $n$-th site.

For an $N$-partite joint measurement $(s_{1},...,s_{N})$, denote by\medskip 
\begin{equation}
P_{(s_{1},...,s_{_{N}})}(D_{1}\times ...\times D_{N}):\text{ }=\mathrm{Prob}%
\{\lambda _{1}\in D_{1},...,\text{ }\lambda _{N}\in D_{N}\}  \label{1}
\end{equation}

\noindent the joint probability of events $D_{1}\subseteq \Lambda _{1},$ $%
...,$ $D_{N}\subseteq \Lambda _{N},$ observed by the corresponding parties
and by\footnote{%
For an integral over all values of variables, the domain of integration is
not usually specified.}%
\begin{equation}
\left\langle \Psi (\lambda _{1},...,\lambda _{N})\right\rangle :\text{ }%
=\dint \Psi (\lambda _{1},...,\lambda _{N})\text{ }P_{(s_{1},...,s_{_{N}})}(%
\mathrm{d}\lambda _{1}\times ...\times \mathrm{d}\lambda _{N})  \label{2}
\end{equation}

\noindent the expected value of a bounded measurable real-valued function $%
\Psi (\lambda _{1},...,\lambda _{N}).$ Specified for a function $\Psi $ of
the product form, notation (\ref{2}) takes the form\medskip 
\begin{equation}
\left\langle \varphi _{1}(\lambda _{1})\cdot ...\cdot \varphi _{N}(\lambda
_{N})\right\rangle =\dint \varphi _{1}(\lambda _{1})\cdot ...\cdot \varphi
_{N}(\lambda _{N})\text{ }P_{(s_{1},...,s_{_{N}})}(\mathrm{d}\lambda
_{1}\times ...\times \mathrm{d}\lambda _{N})  \label{3}
\end{equation}%
and may refer either to the joint probability\footnote{%
Here, $\chi _{_{D}}(\lambda ),$ $\lambda \in \Lambda ,$ is an indicator
function of a subset $D\subseteq \Lambda .$ That is: $\chi _{D}(\lambda )=1$
if $\lambda \in D$ and $\chi _{D}(\lambda )=0\ $if\ $\lambda \notin D.$}: 
\begin{eqnarray}
&&\left\langle \text{ }\chi _{D_{1}}(\lambda _{1})\cdot ...\cdot \chi
_{D_{N}}(\lambda _{N})\right\rangle  \label{4} \\
&=&\dint \chi _{D_{1}}(\lambda _{1})\cdot ...\cdot \chi _{D_{N}}(\lambda
_{N})\text{ }P_{(s_{1},...,s_{N})}(\mathrm{d}\lambda _{1}\times ...\times 
\mathrm{d}\lambda _{N})  \notag \\
&=&P_{(s_{1},...,s_{N})}(D_{1}\times ...\times D_{N}),  \notag
\end{eqnarray}%
or, if outcomes are real-valued and bounded, to the mean value:%
\begin{equation}
\left\langle \lambda _{n_{1}}\cdot ...\cdot \lambda _{n_{_{M}}}\right\rangle
=\dint \lambda _{n_{1}}\cdot ...\cdot \lambda _{n_{_{M}}}\text{ }%
P_{(s_{1},...,s_{_{N}})}(\mathrm{d}\lambda _{1}\times ...\times \mathrm{d}%
\lambda _{N})  \label{5}
\end{equation}%
of the product of outcomes observed at $M\leq N$ sites: $1\leq
n_{1}<...<n_{M}\leq N.$ For $M\geq 2,$ the mean value (\ref{5}) is referred
to as \emph{the} \emph{correlation function. }A correlation function\emph{\ }%
for\emph{\ }an $N$-partite joint measurement is called\emph{\ full }whenever%
\emph{\ }$M=N.$

If only outcomes of $M<N$ parties $1\leq n_{1}<...<n_{M}\leq N$ are taken
into account while outcomes of all other parties are ignored then the joint
probability distribution of outcomes observed at these $M$ sites is given by
the following marginal\ 
\begin{equation}
P_{(s_{1},...,s_{_{N}})}(\Lambda _{1}\times ...\times \Lambda
_{n_{1}-1}\times \mathrm{d}\lambda _{_{n_{1}}}\times \Lambda
_{n_{1}+1}\times ...\times \Lambda _{n_{_{M}}-1}\times \mathrm{d}\lambda
_{_{n_{_{_{M}}}}}\times \Lambda _{n_{_{M}}+1}\times ...\times \Lambda
_{_{N}})  \label{6}
\end{equation}%
of distribution $P_{(s_{1},...,s_{_{N}})}.$ In particular, the marginal 
\begin{equation}
P_{(s_{1},...,s_{_{N}})}(\Lambda _{1}\times ...\times \Lambda _{n-1}\times 
\mathrm{d}\lambda _{n}\times \Lambda _{n+1}\times ...\times \Lambda _{_{N}})
\label{7}
\end{equation}%
represents the probability distribution of outcomes observed at $n$-th site.

Recall that events $D_{1},$ $...,D_{_{N}}$ observed by $N$ parties are \emph{%
probabilistically independent }[18] if 
\begin{equation}
P_{(s_{1},...,s_{_{N}})}(D_{1}\times ...\times D_{_{N}})=\text{ }\tprod_{n}%
\text{ }P_{(s_{_{1}},...,s_{_{N}})}(\Lambda _{1}\times ...\times \Lambda
_{n-1}\times D_{n}\times \Lambda _{n+1}\times ...\times \Lambda _{_{N}}).
\label{8}
\end{equation}

\section{Nonsignaling, the EPR locality and Bell's locality}

Consider now an $N$-partite measurement situation where any $n$-th party
performs $S_{n}\geq 1$ measurements, each specified by a positive integer $%
s_{n}\in \{1,...,S_{n}\}.$ Let $\Lambda _{n}^{(s_{n})}$ be a set of outcomes 
$\lambda _{n}^{(s_{n})}$, observed under $s_{n}$-th measurement at $n$-th
site.

This measurement situation ($N$-partite correlation experiment) is described
by the whole family 
\begin{equation}
\mathcal{E}=\{(s_{1},...,s_{N})\mid
s_{1}=1,...,S_{1},...,s_{N}=1,...,S_{N}\},  \label{9}
\end{equation}%
consisting of $S_{1}\times ....\times S_{N}$ joint measurements $%
(s_{1},...,s_{N})$ with joint probability distributions $%
P_{(s_{1},..,s_{_{N}})}^{(\mathcal{E})}$ that may, in general, depend not
only on settings of the corresponding joint measurement $(s_{1},..,s_{_{N}})$
but also on a structure of the whole experiment $\mathcal{E}$, in
particular, on settings of other parties' measurements.

Let, for any joint measurements $(s_{1},...,s_{_{N}}),$ $(s_{1}^{\prime
},...,s_{_{N}}^{\prime })\in \mathcal{E}$, with $M<N$ common settings $%
s_{n_{1}},...,s_{n_{_{M}}}$ at arbitrary sites $1\leq n_{1}<$ $...<n_{M}\leq
N,$ the marginal probability distributions (\ref{6}) of outcomes observed at
these sites coincide, that is:\medskip 
\begin{eqnarray}
&&P_{(s_{1},..,s_{_{N}})}^{(\mathcal{E})}(\Lambda _{1}^{(s_{1})}\times
...\times \Lambda _{n_{1}-1}^{(s_{n_{1}-1})}\times \mathrm{d}\lambda
_{n_{1}}^{(s_{n_{1}})}\times ...\times \mathrm{d}\lambda
_{n_{_{M}}}^{(s_{_{n_{_{M}}}})}\times \Lambda
_{n_{_{M}}+1}^{(s_{n_{M}+1})}\times ...\times \Lambda _{_{N}}^{(s_{N})}) 
\notag \\
&&  \label{11} \\
&=&P_{(s_{1}^{\prime },..,s_{_{N}}^{\prime })}^{(\mathcal{E})}(\Lambda
_{1}^{(s_{1}^{\prime })}\times ...\times \Lambda
_{n_{1}-1}^{(s_{n_{1}-1}^{\prime })}\times \mathrm{d}\lambda
_{n_{1}}^{(s_{n_{1}})}\times ...\times \mathrm{d}\lambda
_{n_{_{M}}}^{(s_{_{n_{_{M}}}})}\times \Lambda
_{n_{_{M}}+1}^{(s_{n_{M}+1}^{\prime })}\times ...\times \Lambda
_{_{N}}^{(s_{N}^{\prime })}).  \notag
\end{eqnarray}

If parties' measurements are performed on \emph{spatially separated} \emph{%
physical} \emph{systems} then (\ref{11}) constitutes a necessary condition
for \emph{nonsignaling} in the sense that: (i) a measurement device of each
party does not directly affect physical systems and measurement devices at
other sites; (ii) spatially separated physical systems either do not
interact with each other or interact \emph{locally}\footnote{%
In the sense that the physical principle of local action [17] is not
violated.} with interaction signals\footnote{%
Interaction signals between physical systems cannot propagate faster than
light.} coming from one system to another already after measurements upon
them. If observed physical systems interact during measurements nonlocally
then the nonsignaling condition (\ref{11}) is, in general, violated.

For a general multipartite correlation experiment, we use a similar
terminology.

\begin{definition}
For a family (\ref{9}) of $N$-partite joint measurements, we refer to (\ref%
{11}) as the nonsignaling condition.
\end{definition}

Let further a measurement of each party be \emph{local} in the Einstein,
Podolsky and Rosen \emph{(EPR)} sense [1]. As specified in footnote 1, the
latter means that results of this measurement are not "in any way disturbed"%
\emph{\ }[1]\emph{\ }by\emph{\ }measurements performed by other parties.

In probabilistic terms, \emph{the EPR locality} of all parties' measurements
under a joint measurement $(s_{1},...,s_{_{N}})\in \mathcal{E}$ is expressed%
\footnote{%
For a bipartite case, this definition was introduced in [12].} by the
dependence of distribution $P_{(s_{1},...,s_{_{N}})}^{(\mathcal{E})}$ and
all its marginals (\ref{6}) only on settings of the corresponding
measurements at the corresponding sites, that is, by the relation:\medskip 
\begin{eqnarray}
&&P_{(s_{1},...,s_{_{N}})}^{(\mathcal{E})}(\Lambda _{1}^{(s_{1})}\times
...\times \Lambda _{n_{1}-1}^{(s_{n_{1}-1})}\times \mathrm{d}\lambda
_{_{n_{1}}}^{(s_{n_{1}})}\times ...\times \mathrm{d}\lambda
_{n_{_{M}}}^{(s_{_{n_{_{M}}}})}\times \Lambda
_{n_{_{M}}+1}^{(s_{n_{M}+1})}\times ...\times \Lambda _{_{N}}^{(s_{N})}) 
\notag \\
&\equiv &P_{(s_{n_{1}},...,s_{n_{_{M}}})}(\mathrm{d}\lambda
_{_{n_{1}}}^{(s_{n_{1}})}\times ...\times \mathrm{d}\lambda
_{_{n_{_{M}}}}^{(s_{n_{_{M}}})}),  \label{12}
\end{eqnarray}%
holding for any $1\leq n_{1}<...<n_{M}\leq N$ and any $1\leq M\leq N.$

With respect to an $N$-partite joint measurement, relation (\ref{12})
induces the following general notion.

\begin{definition}
An $N$-partite joint measurement $(s_{1},...,s_{N})\in \mathcal{E}$ \ is EPR
local if its joint probability distribution has the form $%
P_{(s_{1},...,s_{_{N}})}^{(\mathcal{E})}\equiv P_{(s_{1},...,s_{_{N}})}$ and
all marginals of $P_{(s_{1},...,s_{_{N}})}$ satisfy condition (\ref{12}).
\end{definition}

Note that condition (\ref{12}) does not imply the product form of
distribution $P_{(s_{1},...,s_{N})}.$ Therefore, under an EPR local
multipartite joint measurement, events observed at different sites \emph{do
not need} to be probabilistically independent.

For an EPR local $N$-partite joint measurement $(s_{1},...,s_{_{N}})$, the
marginal probability distribution (\ref{7}) of outcomes observed at $n$-th
site is determined only by a measurement $s_{n}$ at this site and we further
denote it by\medskip 
\begin{equation}
P_{n}^{(s_{n})}(\mathrm{d}\lambda _{n}^{(s_{n})}):\text{ }%
=P_{(s_{1},...,s_{_{N}})}(\Lambda _{1}^{(s_{1})}\times ...\times \Lambda
_{n-1}^{(s_{n-1})}\times \mathrm{d}\lambda _{n}^{(s_{n})}\times \Lambda
_{n+1}^{(s_{n+1})}\times ...\times \Lambda _{_{N}}^{(s_{N})}).  \label{13}
\end{equation}

From (\ref{12}) it follows that any family of EPR local $N$-partite joint
measurements satisfies \emph{the} \emph{nonsignaling} \emph{condition} (\ref%
{11}). However, the converse of this statement is not, in general, true.

\begin{proposition}
For a family (\ref{9}) of $N$-partite joint measurements satisfying the
nonsignaling condition (\ref{11}), each of joint measurements does not need
to be EPR local.
\end{proposition}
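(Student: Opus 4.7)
The plan is to falsify the implication by exhibiting a family $\mathcal{E}$ satisfying (\ref{11}) which contains a joint measurement that is not EPR local in the sense of Definition 2. The decisive feature is that (\ref{11}) is an internal-to-$\mathcal{E}$ compatibility statement about proper marginals, whereas Definition 2 additionally demands $P^{(\mathcal{E})}_{(s_{1},\ldots ,s_{N})}\equiv P_{(s_{1},\ldots ,s_{N})}$ for an $\mathcal{E}$-independent distribution $P_{(s_{1},\ldots ,s_{N})}$ intrinsically associated with the $N$-partite joint measurement in the sense of Section 2 (and analogous identifications $P_{(s_{n_{1}},\ldots ,s_{n_{M}})}$ for each proper marginal). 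A counterexample therefore hinges on two nonsignaling families that realize the same joint measurement with incompatible full distributions.

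Concretely, I would work in the minimal bipartite dichotomic setup $N=2$, $S_{1}=S_{2}=2$, with outcomes $\lambda _{n}^{(s_{n})}\in \{-1,+1\}$, and build two families $\mathcal{E}_{1},\mathcal{E}_{2}$ whose four joint distributions $P^{(\mathcal{E}_{j})}_{(s_{1},s_{2})}$ share all single-site marginals (so both satisfy (\ref{11}) by direct verification, the only nontrivial subset being a singleton) but differ at some specific tuple $(s_{1},s_{2})$: for instance, one family carries nontrivial $\pm 1$-correlations at $(1,1)$ while the other is of product form there, with the remaining three joint distributions adjusted so as to keep identical single-site marginals in both families. Then no single $\mathcal{E}$-independent $P_{(s_{1},s_{2})}$ can coincide with $P^{(\mathcal{E}_{1})}_{(s_{1},s_{2})}$ and $P^{(\mathcal{E}_{2})}_{(s_{1},s_{2})}$ simultaneously, so at least one of the two joint measurements $(s_{1},s_{2})\in \mathcal{E}_{j}$ must fail condition (\ref{12}); that family is the required witness.

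The main obstacle is to pin down that $P_{(s_{1},s_{2})}$ in Definition 2 refers to the one distribution attached to the two-partite measurement itself, in the sense of (\ref{1}) of Section 2, and so cannot be reselected family by family --- otherwise the condition would collapse to (\ref{11}) and the proposition would be false. Once this is made precise, the remaining work is essentially combinatorial: exhibit two bipartite families with matching single-site marginals but distinct correlation structure, then invoke the reselection obstruction above to conclude that at least one of them fails EPR locality. The construction extends to arbitrary $N$ and larger setting counts by padding with fixed marginals, so the obstruction is generic, not an artifact of the bipartite dichotomic toy model.
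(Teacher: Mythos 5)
Your argument is correct and rests on the same conceptual crux as the paper's --- that the nonsignaling condition (\ref{11}) only constrains marginal compatibility \emph{within} a fixed family $\mathcal{E}$, whereas the EPR locality condition (\ref{12}) demands that $P^{(\mathcal{E})}_{(s_1,\ldots,s_N)}$ coincide with a distribution intrinsic to the joint measurement and hence independent of the rest of $\mathcal{E}$ --- but the execution is genuinely different. The paper exhibits a \emph{single} explicit family (\ref{14}) in which the mixing measure $\tau_{a_1,a_2}^{(b_1,b_2)}$ carries an explicit dependence on all four settings; the marginals (\ref{15})--(\ref{16}) are then computed to verify (\ref{11}), while the built-in dependence of the joint distributions on the whole setting collection shows that (\ref{12}) need not hold. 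You instead run a two-family comparison: two nonsignaling families sharing all single-site marginals but disagreeing on the joint distribution at one setting pair cannot both satisfy $P^{(\mathcal{E}_j)}_{(s_1,s_2)}\equiv P_{(s_1,s_2)}$ for one and the same intrinsic $P_{(s_1,s_2)}$, so at least one of them is the required witness. Your route has the virtue of making the phrase ``depends on the other settings'' operational (for a single fixed family that phrase is only meaningful by implicit comparison with alternative settings, which is exactly what the paper's $\tau_{a_1,a_2}^{(b_1,b_2)}$ notation encodes), at the price of a non-constructive ``at least one of the two'' conclusion and of leaving the explicit distributions unwritten --- though they are trivial to supply (uniform $\pm1$ marginals everywhere, perfect correlation versus product form at $(1,1)$). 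You also correctly flag the one genuine modeling point on which both proofs rest: that the joint measurement $(s_1,s_2)$ is the \emph{same} measurement across the two families (or, in the paper's version, across different choices of companion settings), so that $P_{(s_1,s_2)}$ in Definition 2 cannot be reselected family by family.
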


\begin{proof}
Consider, for example, the family $\mathcal{E}^{\prime }=\{(a_{i},b_{k})\mid
i,k=1,2\}$ of bipartite\footnote{%
In quantum information literature, two parties are\ traditionally named as
Alice and Bob and their measurements are usually labeled by $a_{i}$ and $%
b_{k}$.} joint measurements, with two settings at each site and the joint
probability distributions\footnote{%
This family of bipartite joint measurements was introduced in [12].}\emph{%
\medskip } 
\begin{equation}
P_{(a_{i},b_{k})}^{(\mathcal{E}^{\prime })}(\mathrm{d}\lambda
_{1}^{(a_{i})}\times \mathrm{d}\lambda _{2}^{(b_{k})})=\dint\limits_{\Omega
}P_{1}^{(a_{i})}(\mathrm{d}\lambda _{1}^{(a_{i})}|\omega )\text{ }%
P_{2}^{(b_{k})}(\mathrm{d}\lambda _{2}^{(b_{k})}|\omega )\text{ }\tau
_{a_{1},a_{2}}^{(b_{1},b_{2})}(\mathrm{d}\omega ),\text{ \ \ \ }i,k=1,2,
\label{14}
\end{equation}%
where measure $\tau _{a_{1},a_{2}}^{(b_{1},b_{2})}$ depends on all
measurements at both parties. From relations%
\begin{eqnarray}
P_{(a_{i},b_{1})}^{(\mathcal{E}^{\prime })}(\mathrm{d}\lambda
_{1}^{(a_{i})}\times \Lambda _{2}^{(b_{1})}) &=&P_{(a_{i},b_{2})}^{(\mathcal{%
E}^{\prime })}(\mathrm{d}\lambda _{1}^{(a_{i})}\times \Lambda _{2}^{(b_{2})})
\label{15} \\
&=&\dint\limits_{\Omega }P_{1}^{(a_{i})}(\mathrm{d}\lambda
_{1}^{(a_{i})}|\omega )\text{ }\tau _{a_{1},a_{2}}^{(b_{1},b_{2})}(\mathrm{d}%
\omega ),\text{ \ \ }\forall i=1,2,  \notag
\end{eqnarray}%
and%
\begin{eqnarray}
P_{(a_{1},b_{k})}^{(\mathcal{E}^{\prime })}(\Lambda _{1}^{(a_{1})}\times 
\mathrm{d}\lambda _{2}^{(b_{k})}) &=&P_{(a_{2},b_{k})}^{(\mathcal{E}^{\prime
})}(\Lambda _{1}^{(a_{2})}\times \mathrm{d}\lambda _{2}^{(b_{k})})
\label{16} \\
&=&\dint\limits_{\Omega }P_{2}^{(b_{k})}(\mathrm{d}\lambda
_{2}^{(b_{k})}|\omega )\text{ }\tau _{a_{1},a_{2}}^{(b_{1},b_{2})}(\mathrm{d}%
\omega ),\text{ \ \ }\forall k=1,2,  \notag
\end{eqnarray}%
it follows that marginals of $P_{(a_{i},b_{k})}^{(\mathcal{E}^{\prime })},$ $%
i,k=1,2,$ satisfy the nonsignaling condition (\ref{11}), though do not, in
general, need to satisfy the EPR locality condition (\ref{12}).
\end{proof}

\bigskip

For an $N$-partite joint measurement $(s_{1},...,s_{N})$ performed on
spatially separated physical systems, \emph{the EPR locality }corresponds to 
\emph{nonsignaling plus no-feedback} of performed measurements on a state of
a composite physical system before all of parties' measurements.

Along with the nonsignaling condition (\ref{11}) and the EPR locality (\ref%
{12}), let us also specify in probabilistic terms the concept of \emph{Bell's%
} \emph{locality,} introduced in [4, 5] for a family of multipartite joint
measurements performed on an identically prepared composite physical system
consisting of spacelike separated particles. This type of locality
corresponds to \emph{nonsignaling plus no-feedback} \emph{plus} \emph{the
existence of variables }$\omega \in \Omega $ of a composite system such that
whenever this system is initially characterized by a variable $\omega \in
\Omega $ with certainty, then, under each joint measurement $%
(s_{1},...,s_{_{N}})\in \mathcal{E},$ any events observed at different sites
are \emph{probabilistically independent:}%
\begin{equation}
P_{(s_{1},...,s_{_{N}})}(\mathrm{d}\lambda _{1}^{(s_{1})}\times ...\times 
\mathrm{d}\lambda _{N}^{(s_{N})}\mid \omega )=P_{1}^{(s_{1})}(\mathrm{d}%
\lambda _{1}^{(s_{1})}|\omega )\cdot ...\cdot P_{N}^{(s_{_{N}})}(\mathrm{d}%
\lambda _{N}^{(s_{N})}|\omega ),\text{ \ \ \ }\forall \omega \in \Omega .
\label{17}
\end{equation}%
If a composite system is initially specified by a probability distribution $%
\nu $ of variables $\omega \in \Omega $ then (\ref{17}) and the law of total
probability\footnote{%
See, for example, in [18].} imply:%
\begin{equation}
P_{(s_{1},...,s_{_{N}})}(\mathrm{d}\lambda _{1}^{(s_{1})}\times ...\times 
\mathrm{d}\lambda _{N}^{(s_{N})})=\dint\limits_{\Omega }P_{1}^{(s_{1})}(%
\mathrm{d}\lambda _{1}^{(s_{1})}|\omega )\cdot ...\cdot P_{N}^{(s_{_{N}})}(%
\mathrm{d}\lambda _{N}^{(s_{N})}|\omega )\text{ }\nu (\mathrm{d}\omega ).
\label{18}
\end{equation}

For a general family of $N$-partite joint measurements, this concept induces
the following notion.

\begin{definition}
A family (\ref{9}) of $N$-partite joint measurements is Bell local if any of
its joint probability distributions admits representation (\ref{18}) where a
probability distribution $\nu $ does not depend on performed measurements.
\end{definition}

From (\ref{11}), (\ref{12}), (\ref{18}) and proposition 1 it follows that,
for an $N$-partite correlation experiment,%
\begin{equation}
Bell^{\prime }s\text{ }locality\text{ }\Rightarrow \text{ }EPR\text{ }%
locality\text{ }\Rightarrow \text{ }Nonsignaling.  \label{19}
\end{equation}%
The converse implications are not, in general, true.

The relation (\ref{19}) between the type of locality meant by Einstein,
Podolsky and Rosen in [1] and the type of locality argued by Bell [4, 5]
indicates that, in contrast to the opinion of Bell [3, 5], the EPR paradox
[1] cannot be, \emph{in principle}, resolved via the violation of Bell's
locality. Moreover, as it is shown in section 3.1, under a multipartite
joint measurement on spacelike separated quantum particles, the EPR locality
is not violated.

Let us now analyse the specification of \emph{locality} and \emph{%
nonsignaling} by other authors.

Werner and Wolf [11] identify "locality" with "nonsignaling" and define it
by the combination of the nonsignaling condition (\ref{11}) with the EPR
locality condition (\ref{12}), specified for a bipartite case. Thus,
Werner-Wolf 's locality [11] constitutes the EPR locality.

Popescu-Rohrlich's [10] "relativistic causality" (nonsignaling) constitutes
the EPR locality (\ref{12}). Barrett-Linden-Massar-Pironio-Popescu-Roberts's
[13] "nonsignaling boxes" correspond to EPR local multipartite correlation
experiments. In both papers [10, 13], "nonlocality" is defined via the
violation of a Bell-type inequality (see footnote 5 and section 4).
Masanes-Acin-Gisin [14] and Gisin [15] define "nonsignaling" and
"nonlocality" similarly to [13].

To our knowledge, the difference (\ref{19}) between nonsignaling [17], the
EPR locality [1] and Bell's locality [4, 5] has not been earlier specified
in the literature.

We stress that the so-called "quantum nonlocality", discussed in the
physical literature ever since the seminal publications [3-5] of Bell, does
not constitute the violation of locality of quantum interactions - under a
multipartite joint measurement on spacelike separated quantum particles, 
\emph{locality of quantum interactions is not violated} (see in section 3.1).

\subsection{EPR local physical models}

Consider now the details of the probabilistic models describing \emph{EPR
local} $N$-partite joint measurements, performed on a composite \emph{%
physical} system, classical or quantum.

\textbf{EPR local} \textbf{classical model}\textit{.} Let, under an EPR
local $N$-partite joint measurement, each party perform a measurement on a 
\emph{classical} subsystem. In this case, there always exist variables $%
\theta \in \Theta $ and a probability distribution $\pi $ (a classical
state) of these variables, characterizing a composite classical system
before measurements and such that, for \emph{any} \emph{EPR local} $N$%
-partite joint measurement $(s_{1},...,s_{_{N}})$ on this classical system
in a state $\pi $, the joint probability distribution $%
P_{(s_{1},...,s_{_{N}})}(\cdot |$ $\pi )$ has the form:%
\begin{equation}
P_{(s_{1},...,s_{_{N}})}(\mathrm{d}\lambda _{1}^{(s_{1})}\times ...\times 
\mathrm{d}\lambda _{N}^{(s_{N})}\text{ }|\text{ }\pi )=\dint\limits_{\Theta
}P_{1}^{(s_{1})}(\mathrm{d}\lambda _{1}^{(s_{1})}|\theta )\cdot ...\cdot
P_{N}^{(s_{_{N}})}(\mathrm{d}\lambda _{N}^{(s_{N})}|\theta )\text{ }\pi (%
\mathrm{d}\theta ),  \label{20}
\end{equation}%
where, for a variable $\theta \in \Theta $ defined initially with certainty, 
$P_{n}^{(s_{n})}(\cdot |\theta )$ represents the probability distribution of
outcomes observed under $s_{n}$-th classical measurement at $n$-th site. In (%
\ref{20}), \emph{the EPR locality} follows from the independence \emph{%
(no-feedback)} of variables $\theta $ and a state $\pi $ on performed
measurements \emph{plus} the independence \emph{(nonsignaling)} of each
conditional distribution $P_{n}^{(s_{n})}(\cdot |\theta )$ on measurements
of other parties.

Let a classical measurement $s_{n}$ at $n$-th site be \emph{ideal, }that is,
describe without an error a property of a composite classical system existed
before this measurement. On a measurable space\footnote{%
In this pair, $\mathcal{F}_{\Theta }$ is a sigma algebra of subsets of a set 
$\Theta .$ For details, see [18, 19].} $(\Theta ,\mathcal{F}_{\Theta }),$
representing a classical composite system before measurements, any of its
observed properties is described by a measurable function $%
f_{n,s_{n}}:\Theta \rightarrow \Lambda _{n}^{(s_{n})}$. In the ideal case,
distribution $P_{n}^{(s_{n},\text{ \textrm{ideal}})}(\cdot |\theta ),\ $%
standing (\ref{20}), takes the form: 
\begin{equation}
P_{n}^{(s_{n},\text{ \textrm{ideal}})}(D_{n}^{(s_{n})}|\theta )=\chi
_{f_{n,s_{n}}^{-1}(D_{n}^{(s_{n})})}(\theta ),  \label{21}
\end{equation}%
where 
\begin{equation}
f_{n,s_{n}}^{-1}(D_{n}^{(s_{n})})=\left\{ \theta \in \Theta \mid
f_{n,s_{n}}(\theta )\in D_{n}^{(s_{n})}\right\} \in \mathcal{F}_{\Theta }
\label{22}
\end{equation}%
is the preimage of a subset $D_{n}^{(s_{n})}\subseteq \Lambda _{n}^{(s_{n})}$
in $\mathcal{F}_{\Theta }$ under mapping $f_{n,s_{n}}.$ If classical
measurements of all parties are ideal, then substituting (\ref{21}) into (%
\ref{20}), we derive that, under an \emph{ideal classical} \emph{EPR local} $%
N$-partite joint measurement $(s_{1},...,s_{_{N}}),$ the joint probability
distribution $P_{(s_{1},...,s_{_{N}})}^{(\text{\textrm{ideal}})}$ has the
image form:%
\begin{equation}
P_{(s_{1},...,s_{_{N}})}^{(\text{\textrm{ideal}})}(D_{1}^{(s_{1})}\times
...\times D_{N}^{(s_{_{N}})}\text{ }|\text{ }\pi )=\pi \left(
f_{1,s_{1}}^{-1}(D_{1}^{(s_{1})})\cap ...\cap
f_{N,s_{_{N}}}^{-1}(D_{N}^{(s_{_{N}})})\right) .  \label{23}
\end{equation}%
\medskip

\textbf{EPR local quantum model.}\emph{\ }If an EPR local $N$-partite joint
measurement is performed on a \emph{quantum }$N$-partite system, then this
system is initially specified by a density operator $\rho $ (a quantum
state) on a complex separable Hilbert space $\mathcal{H}_{1}\otimes
...\otimes \mathcal{H}_{N}$ and, for any \emph{EPR local} $N$-partite joint
measurement performed on this system in a state $\rho $, the joint
probability distribution $P_{(s_{1},...,s_{_{N}})}(\cdot |\rho )$ is given
by:%
\begin{equation}
P_{(s_{1},...,s_{_{N}})}(\mathrm{d}\lambda _{1}^{(s_{1})}\times ...\times 
\mathrm{d}\lambda _{N}^{(s_{N})}\text{ }|\text{ }\rho )=\mathrm{tr}[\rho \{%
\mathrm{M}_{1}^{(s_{1})}(\mathrm{d}\lambda _{1}^{(s_{1})})\otimes ...\otimes 
\mathrm{M}_{N}^{(s_{_{N}})}(\mathrm{d}\lambda _{N}^{(s_{N})})\}],  \label{24}
\end{equation}%
where $\mathrm{M}_{n}^{(s_{n})}(\mathrm{d}\lambda _{n}^{(s_{n})})$ is a
positive operator-valued \emph{(POV)} measure\footnote{$\mathrm{M}%
_{n}^{(s_{n})}$ is a normalized measure with values $\mathrm{M}%
_{n}^{(s_{n})}(D_{n}^{(s_{n})}),$ $\forall D_{n}^{(s_{n})}\subseteq \Lambda
_{n}^{(s_{n})},$ that are positive operators on a complex separable Hilbert
space $\mathcal{H}_{n}$. On the notion of a POV measure, see, for example,
the review section in [20].
\par
{}}, describing $s_{n}$-th quantum measurement at $n$-th site. In (\ref{24}%
), the \emph{EPR locality} is expressed by the independence \emph{%
(no-feedback)} of state $\rho $ on performed measurements \emph{plus} the
independence \emph{(nonsignaling)} of each $\mathrm{M}_{n}^{(s_{n})}\ $on
measurements at other sites.

If $s_{n}$-th measurement of $n$-th party is \emph{ideal,} that is,
reproduces without an error a real-valued quantum property described on $%
\mathcal{H}_{n}$ by a quantum observable $W_{s_{_{n}}},$ then the
corresponding POV measure $\mathrm{M}_{n}^{(s_{n})}$ is projection-valued
and is given by the spectral measure $\mathrm{E}_{_{W_{s_{_{n}}}}}$ of
observable $W_{s_{_{n}}}.$

Let, for example, an $N$-partite joint measurement be performed on \emph{%
spacelike} separated quantum particles in a state $\rho $ on $\mathcal{H}%
_{1}\otimes ...\otimes \mathcal{H}_{N}$. Then its joint probability
distribution has the form (\ref{24}), satisfying the EPR locality condition (%
\ref{12}).

Thus, \emph{under} \emph{any multipartite joint measurement on spacelike
separated quantum particles, the EPR locality (hence, nonsignaling) is not
violated.}

\section{LHV simulation}

Consider a possibility of a local hidden variable (LHV) simulation of an $N$%
-partite correlation experiment described by the $S_{1}\times ...\times
S_{N} $-setting family%
\begin{equation}
\mathcal{E}=\{(s_{1},...,s_{N})\mid s_{1}=1,..,S_{1},\text{ }...,\text{ }%
s_{N}=1,...,S_{N}\},  \label{25}
\end{equation}%
of $N$-partite joint measurements with joint probability distributions 
\begin{equation}
\{P_{(s_{1},...,s_{_{N}})}^{(\mathcal{E})},\text{ \ \ }s_{1}=1,...,S_{1},%
\text{ }...,\text{ }s_{N}=1,...,S_{N}\}.  \label{26}
\end{equation}

The following notion generalizes to an arbitrary multipartite case the
concept of a stochastic hidden variable model, formulated by Fine [7] for a
bipartite case with two settings and two outcomes per site.

\begin{definition}
An $S_{1}\times ...\times S_{N}$-setting family (\ref{25}) of $N$-partite
joint measurements, with $S_{1}+...+S_{N}>N$ and outcomes of any spectral
type, discrete or continuous, admits a local hidden variable\footnote{%
This terminology has been formed historically.} (LHV) model if all its joint
probability distributions (\ref{26}) admit the factorizable representation
of the form:\medskip 
\begin{equation}
P_{(s_{1},...,s_{_{N}})}^{(\mathcal{E})}(\mathrm{d}\lambda
_{1}^{(s_{_{1}})}\times ...\times \mathrm{d}\lambda
_{N}^{(s_{_{N}})})=\dint\limits_{\Omega }P_{1}^{(s_{_{1}})}(\mathrm{d}%
\lambda _{1}^{(s_{_{1}})}|\omega )\cdot ...\cdot P_{N}^{(s_{_{N}})}(\mathrm{d%
}\lambda _{N}^{(s_{_{N}})}|\omega )\text{ }\nu _{\mathcal{E}}(\mathrm{d}%
\omega ),  \label{27}
\end{equation}%
\medskip in terms of a single probability space\footnote{%
In this triple, this triple, $\nu $ is a probability distribution on a
measurable space $(\Omega ,\mathcal{F}_{\Omega })$ (see footnote 15). In
measure theory, triple $(\Omega ,\mathcal{F}_{\Omega },\nu )$ called a
measure space.} $(\Omega ,\mathcal{F}_{\Omega },\nu _{\mathcal{E}})$ and
conditional probability distributions\footnote{%
For any subset $D\subseteq \Lambda ,$ function $P(D|\cdot ):$ $\Omega
\rightarrow \lbrack 0,1]$ is measurable.} $P_{1}^{(s_{_{1}})}(\cdot |\omega
),...,$ $P_{N}^{(s_{_{N}})}(\cdot |\omega ),$ defined $\nu _{\mathcal{E}}$%
-almost everywhere on $\Omega $ and such that each $P_{n}^{(s_{n})}(\cdot
|\omega )$ depends only on a setting of the corresponding measurement at $n$%
-th site.\newline
If, in addition to (\ref{27}), some distributions $P_{n}^{(s_{n})}(\cdot
|\omega )$ corresponding to different sites are correlated then we refer to
such an LHV model as conditional.
\end{definition}

If every party observes a finite number of outcomes, for example, each $%
\Lambda _{n}^{(s_{n})}=\Lambda =\{\lambda _{1},....,\lambda _{K}\},$ then it
suffices to verify the validity of representation (\ref{27}) only for all
one-point subsets $\{\lambda _{k_{1}}\}\times ...\times \{\lambda
_{k_{_{N}}}\}=\{(\lambda _{k_{1}},...,\lambda _{k_{_{N}}})\}\subset \Lambda
^{N}$.

From the LHV representation (\ref{27}) it follows that any family (\ref{25})
of $N$-partite joint measurements admitting an LHV model satisfies\footnote{%
The converse of this statement is not, in general, true.} the \emph{%
nonsignaling} condition (\ref{11}). We stress that, in an LHV model of a
general type, a probability distribution $\nu _{\mathcal{E}}$ has a purely
simulation character and may depend on measurement settings of all (or some)
parties. Therefore, a family of $N$-partite joint measurements admitting a
general LHV model \emph{does not need} to be either EPR local or Bell local%
\emph{\ }(see section 3).

In view of representations (\ref{20}), (\ref{27}), any $S_{1}\times
...\times S_{N}$-setting family (\ref{25}) of EPR local $N$-partite joint
measurements performed on a classical state $\pi $ on $(\Theta ,\mathcal{F}%
_{\Theta })$ admits the LHV model where the probability space is given by $%
(\Theta ,\mathcal{F}_{\Theta },\pi )$ and does not depend on either numbers
or settings of parties' measurements. This LHV model is of the special, 
\emph{classical,} type. From definition 3 it follows that Bell's locality
[4, 5] of a multipartite correlation experiment is equivalent to the
existence for this experiment of an LHV model of the classical type.

If, however, in an $S_{1}\times ...\times S_{N}$-setting family (\ref{25})
of EPR local $N$-partite joint measurements, each of joint measurements is
performed on a quantum state $\rho $ on $\mathcal{H}_{1}\otimes ...\otimes 
\mathcal{H}_{N}$ then, in view of (\ref{24}), this family does not
necessarily admit an LHV model. Possible types of quantum LHV models and
their relation to Werner's notion [8] of an LHV model for a multipartite
quantum state are considered in section 5.

Let us now specify the following type of an LHV model.

\begin{definition}
An LHV model (\ref{27}), conditional or unconditional, is called
deterministic if there exist measurable functions $f_{n,s_{n}}:$ $\Omega
\rightarrow \Lambda _{n}^{(s_{n})}$ such that, in representation (\ref{27}),
all conditional probability distributions have the special form\footnote{%
Here, $\chi _{f_{n,s_{n}}^{-1}(D_{n}^{(s_{n})})}(\omega )$ is an indicator
function of preimage $f_{n,s_{n}}^{-1}(D_{n}^{(s_{n})})$, see (\ref{22}).}:%
\begin{equation}
P_{n}^{(s_{n})}(D_{n}^{(s_{n})}|\omega )=\chi
_{f_{n,s_{n}}^{-1}(D_{n}^{(s_{n})})}(\omega ),\text{ \ \ \ }\forall
D_{n}^{(s_{n})}\subseteq \Lambda _{n}^{(s_{n})},  \label{28}
\end{equation}%
$\nu _{\mathcal{E}}$ -almost everywhere on $\Omega .$
\end{definition}

In a deterministic LHV model specified by a probability space $(\Omega ,%
\mathcal{F}_{\Omega },\nu _{\mathcal{E}})$, to each variable $\omega \in
\Omega ,$ there corresponds the unique outcome $\lambda
_{n}^{(s_{n})}=f_{n,s_{n}}(\omega )$ for any measurement $s_{n}$ at an $n$%
-th site, and all joint distributions $P_{(s_{1},...,s_{N})}^{(\mathcal{E})}$
have the image form 
\begin{equation}
P_{(s_{1},...,s_{N})}^{(\mathcal{E})}(D_{1}^{(s_{1})}\times ...\times
D_{N}^{(s_{_{N}})})=\nu _{\mathcal{E}}\left(
f_{1,s_{1}}^{-1}(D_{1}^{(s_{1})})\cap ...\cap
f_{N,s_{_{N}}}^{-1}(D_{N}^{(s_{_{N}})})\right) ,  \label{29}
\end{equation}%
for any outcome events $D_{1}^{(s_{_{1}})}\subseteq \Lambda
_{1}^{(s_{_{1}})},$ $...,$ $D_{N}^{(s_{_{N}})}\subseteq \Lambda
_{N}^{(s_{_{N}})}.$ The notion of a deterministic LHV model corresponds to
the description of an $S_{1}\times ...\times S_{N}$-setting multipartite
correlation experiment in the frame of Kolmogorov's model [16].

Let an $S_{1}\times ...\times S_{N}$-setting family (\ref{25}) of $N$%
-partite joint measurements admit an LHV model specified by a probability
space $(\Omega ,\mathcal{F}_{\Omega },\nu _{\mathcal{E}}).$ From the
structure of representation (\ref{27}) and formula (\ref{3}) it follows:

\begin{enumerate}
\item the same LHV model holds for any its $K_{1}\times ...\times K_{N}$%
-setting subfamily of $N$-partite joint measurements, where $K_{1}\leq
S_{1}, $ $...,$ $K_{N}\leq S_{N}$, and for any $S_{n_{1}}\times ...\times
S_{n_{_{M}}}$-setting family%
\begin{equation}
\{(s_{n_{1}},...,s_{n_{_{M}}})\mid
s_{n_{1}}=1,...,S_{n_{1}},...,s_{n_{_{M}}}=1,...,S_{n_{_{M}}}\}  \label{30}
\end{equation}%
\medskip of $M$-partite joint measurements: $1\leq n_{1}<...<n_{M}\leq N$, $%
1\leq M<N$, induced by family (\ref{25});

\item for any measurable bounded real-valued functions $\varphi
_{n}^{(s_{n})}(\lambda _{n}^{(s_{n})}),$ $n=1,...,N,$ the expected value of
their product admits the factorizable representation:%
\begin{equation}
\left\langle \varphi _{1}^{(s_{1})}(\lambda _{1}^{(s_{1})})\cdot ...\cdot
\varphi _{N}^{(s_{_{N}})}(\lambda _{N}^{(s_{_{N}})})\right\rangle _{\mathcal{%
E}}=\dint \Phi _{1}^{(s_{1})}(\omega )\cdot ...\cdot \Phi
_{N}^{(s_{_{N}})}(\omega )\text{ }\nu _{\mathcal{E}}(\mathrm{d}\omega ),
\label{31}
\end{equation}%
\medskip with $\nu _{\mathcal{E}}$-measurable functions $\Phi
_{n}^{(s_{n})}(\omega )=\dint \varphi _{n}^{(s_{n})}(\lambda
_{n}^{(s_{n})})P_{n}^{(s_{_{n}})}(\mathrm{d}\lambda _{n}^{(s_{n})}|\omega )$%
. In a deterministic LHV model, $\Phi _{n}^{(s_{n})}(\omega )=(\varphi
_{n}^{(s_{n})}\circ f_{n,s_{n}})(\omega )$ and, in case of real-valued
outcomes,%
\begin{equation}
\left\langle \lambda _{n_{1}}^{(s_{n_{1}})}\cdot ...\cdot \lambda
_{n_{_{M}}}^{(s_{_{n_{_{M}}}})}\right\rangle _{\mathcal{E}}=\dint
f_{n_{1},s_{n_{1}}}(\omega )\cdot ...\cdot f_{n_{_{M}},s_{n_{_{M}}}}(\omega )%
\text{ }\nu _{\mathcal{E}}(\mathrm{d}\omega ),  \label{31'}
\end{equation}%
where the values of functions $f_{n,s_{n}}$ constitute outcomes under the
corresponding measurements at the corresponding sites.\medskip
\end{enumerate}

The following theorem establishes the mutual equivalence of \emph{four}
different statements on an LHV simulation of a multipartite correlation
experiment. Statements \textrm{(a)-(c) }generalize to an arbitrary
multipartite case, with any number of settings and any spectral type of
outcomes at each site, the corresponding propositions of Fine [7] for a $%
2\times 2$-setting bipartite case with two outcomes per site. Statement 
\textrm{(d)} establishes in a general setting the equivalence between the
existence of an LHV model (\ref{27}) and the existence of the LHV-form
representation (\ref{33}) for the product expectations of the special type.

\begin{theorem}
For an $S_{1}\times ...\times S_{N}$-setting family (\ref{25}) of $N$%
-partite joint measurements, with any spectral type of outcomes at each
site, the following statements are equivalent:\medskip \newline
(a) there exists an LHV model formulated by definition 4;\medskip \newline
(b)\textrm{\ }there exists a deterministic LHV model specified by definition
5;\medskip \newline
(c) there exists a joint probability distribution 
\begin{equation}
\mu _{\mathcal{E}}\text{ }(\mathrm{d}\lambda _{1}^{(1)}\times ...\times 
\mathrm{d}\lambda _{1}^{(S_{1})}\times ....\times \mathrm{d}\lambda
_{N}^{(1)}\times ...\times \mathrm{d}\lambda _{N}^{(S_{N})})  \label{32}
\end{equation}%
that returns all distributions $P_{(s_{1},...,s_{_{N}})}^{(\mathcal{E})}$ of
family (\ref{25}) as marginals;\medskip \newline
(d)\textrm{\ }there exists a probability space $(\Omega ,\mathcal{F}_{\Omega
},\nu _{\mathcal{E}})$ and $\nu _{\mathcal{E}}$-measurable real-valued
functions $\Psi _{n}^{(s_{_{n}})}:\Omega \rightarrow \lbrack -1,1]$ on $%
(\Omega ,\mathcal{F}_{\Omega })$ such that, for any $\pm 1$-valued functions 
$\psi _{n}^{(s_{_{n}})}:\Lambda _{n}^{(s_{n})}\rightarrow \{-1,1\},$ the
LHV-form representation: \medskip 
\begin{equation}
\left\langle \psi _{n_{1}}^{(s_{_{n_{1}}})}(\lambda
_{n_{1}}^{(s_{_{n_{1}}})})\cdot ...\cdot \psi
_{n_{_{M}}}^{(s_{_{n_{_{M}}}})}(\lambda
_{n_{_{M}}}^{(s_{_{n_{_{M}}}})})\right\rangle _{\mathcal{E}}=\dint \Psi
_{n_{1}}^{(s_{_{n_{_{1}}}})}(\omega )\cdot ...\cdot \Psi
_{n_{_{M}}}^{(s_{_{n_{_{M}}}})}(\omega )\text{ }\nu _{\mathcal{E}}(\mathrm{d}%
\omega )  \label{33}
\end{equation}%
\medskip holds for arbitrary%
\begin{equation}
1\leq n_{1}<...<n_{M}\leq N,\text{ \ \ \ }1\leq M\leq N.  \label{34}
\end{equation}%
\medskip
\end{theorem}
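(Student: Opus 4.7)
The strategy is to establish the cycle of implications $\mathrm{(b)}\Rightarrow\mathrm{(a)}\Rightarrow\mathrm{(d)}\Rightarrow\mathrm{(c)}\Rightarrow\mathrm{(b)}$, which yields the full four-way equivalence. The implication $\mathrm{(b)}\Rightarrow\mathrm{(a)}$ is immediate: a deterministic LHV model in the sense of definition~5 is a special case of definition~4, obtained by taking the conditional distributions to be Dirac masses $P_n^{(s_n)}(\cdot\mid\omega)=\delta_{f_{n,s_n}(\omega)}(\cdot)$, which are automatically measurable in $\omega$ and depend only on the local setting $s_n$. The implication $\mathrm{(c)}\Rightarrow\mathrm{(b)}$ is nearly as clean: given the joint distribution $\mu_\mathcal{E}$ on the product space, set $\Omega=\prod_{n,s_n}\Lambda_n^{(s_n)}$, $\nu_\mathcal{E}=\mu_\mathcal{E}$, and take $f_{n,s_n}$ to be the coordinate projection onto the $(n,s_n)$-factor, so that the marginal of $\mu_\mathcal{E}$ over the coordinates corresponding to settings $(s_1,\dots,s_N)$ reproduces $P_{(s_1,\dots,s_N)}^{(\mathcal{E})}$ in precisely the image form (\ref{29}).

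For $\mathrm{(a)}\Rightarrow\mathrm{(d)}$, I start from the LHV representation (\ref{27}) with conditional distributions $P_n^{(s_n)}(\cdot\mid\omega)$ and, for each choice of $\pm 1$-valued measurable functions $\psi_n^{(s_n)}$, set
$$\Psi_n^{(s_n)}(\omega):=\int\psi_n^{(s_n)}(\lambda_n^{(s_n)})\,P_n^{(s_n)}(d\lambda_n^{(s_n)}\mid\omega),$$
which is $\nu_\mathcal{E}$-measurable and takes values in $[-1,1]$ since $|\psi_n^{(s_n)}|\le 1$. Feeding the $\psi_n^{(s_n)}$ into the mean-value formula (\ref{3}), substituting the LHV form (\ref{27}), and applying Fubini's theorem to integrate out the $\lambda$-variables factor by factor against the product of $\psi$'s produces exactly (\ref{33}), for every choice of sites $1\le n_1<\dots<n_M\le N$.

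The hard step is $\mathrm{(d)}\Rightarrow\mathrm{(c)}$: reconstructing a full joint distribution $\mu_\mathcal{E}$ on $\prod_{n,s_n}\Lambda_n^{(s_n)}$ from the factorizable formula (\ref{33}) that is given only for products of $\pm 1$-valued test functions. My plan is to exploit the identity $\chi_{D_n^{(s_n)}}=(1+\psi_n^{(s_n)})/2$ with $\psi_n^{(s_n)}:=2\chi_{D_n^{(s_n)}}-1$, so that any product of indicators $\chi_{D_1^{(s_1)}}\cdots\chi_{D_N^{(s_N)}}$ expands into a finite signed sum of products of $\pm 1$-valued functions. Applying (\ref{33}) term by term then yields, for every setting $(s_1,\dots,s_N)$, an explicit representation of $P_{(s_1,\dots,s_N)}^{(\mathcal{E})}(D_1^{(s_1)}\times\dots\times D_N^{(s_N)})$ as an integral against $\nu_\mathcal{E}$ of a combination of products of the $\Psi_n^{(s_n)}$, and analogous formulas for every lower-dimensional marginal using $M<N$ in (\ref{33}). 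The factorizable form itself then forces the family $\{P_{(s_1,\dots,s_N)}^{(\mathcal{E})}\}$ to obey the consistency relations needed for the Kolmogorov extension theorem, which is then invoked on the product space $\prod_{n,s_n}\Lambda_n^{(s_n)}$ to produce the desired $\mu_\mathcal{E}$ having all $P_{(s_1,\dots,s_N)}^{(\mathcal{E})}$ as its marginals.

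The main obstacle lies in this last step: I must verify that the expansion into $\pm 1$-valued building blocks, combined with (\ref{33}) applied to every subset of sites, really does yield a countably additive probability measure on the full product $\sigma$-algebra, not merely a finitely additive set function on cylinder sets. For finite outcome sets this is a straightforward combinatorial check, but for outcomes of arbitrary spectral type the passage from cylinders to the generated $\sigma$-algebra requires a monotone-class argument, using the $\nu_\mathcal{E}$-measurability of the $\Psi_n^{(s_n)}$ together with dominated convergence to approximate bounded measurable test functions by $\pm 1$-valued ones and to control the resulting limits uniformly.
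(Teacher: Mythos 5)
Your arrows $\mathrm{(b)}\Rightarrow\mathrm{(a)}$, $\mathrm{(a)}\Rightarrow\mathrm{(d)}$ and $\mathrm{(c)}\Rightarrow\mathrm{(b)}$ coincide with the paper's (the last two are exactly property (\ref{31}) and the coordinate-projection construction (\ref{38})--(\ref{40})). The gap is in your $\mathrm{(d)}\Rightarrow\mathrm{(c)}$. The Kolmogorov extension theorem is the wrong instrument here: the coordinate index set $\{(n,s_{n})\}$ is finite, and, more seriously, the data you propose to feed into it --- the distributions $P^{(\mathcal{E})}_{(s_{1},\dots,s_{N})}$ and their marginals, rewritten via $\chi_{D}=(1+\psi_{D})/2$ and (\ref{33}) --- prescribe candidate marginals only for coordinate subsets containing \emph{at most one setting per site}. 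Statement (c) demands a measure on $\prod_{n}\prod_{s_{n}}\Lambda_{n}^{(s_{n})}$ whose marginals onto pairs such as $(\lambda_{1}^{(1)},\lambda_{1}^{(2)})$ --- two different settings of the same party --- are not determined by the experiment at all, and no consistency check on the given sub-family can produce them. Indeed, consistency of the one-setting-per-site marginals on their overlaps is precisely the nonsignaling condition (\ref{11}), which is strictly weaker than the existence of an LHV model, so ``consistency plus extension'' cannot close the argument.

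The fix is already implicit in your own computation. After expanding the indicators and applying (\ref{33}) you arrive at $P^{(\mathcal{E})}_{(s_{1},\dots,s_{N})}(D_{1}^{(s_{1})}\times\cdots\times D_{N}^{(s_{N})})=\int\prod_{n}\tfrac{1}{2}\bigl(1+\Psi_{n}^{(s_{n})}(D_{n}^{(s_{n})})(\omega )\bigr)\,\nu _{\mathcal{E}}(\mathrm{d}\omega )$; at this point you should stop and observe that $P_{n}^{(s_{n})}(D\,|\,\omega ):=\tfrac{1}{2}\bigl(1+\Psi_{n}^{(s_{n})}(D)(\omega )\bigr)$ are conditional probability distributions (after checking normalization and additivity in $D$, which is where the countable-additivity care you mention genuinely belongs), so that this \emph{is} the LHV representation (\ref{27}): you have proved $\mathrm{(d)}\Rightarrow\mathrm{(a)}$ directly, which is the paper's route via (\ref{43})--(\ref{45}). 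Statement (c) is then obtained from (a) not by extension but by the explicit product formula $\int_{\Omega }\prod_{n,s_{n}}P_{n}^{(s_{n})}(\mathrm{d}\lambda _{n}^{(s_{n})}|\omega )\,\nu _{\mathcal{E}}(\mathrm{d}\omega )$ over \emph{all} settings (the paper's (\ref{35})), which is what manufactures the missing same-site cross-setting marginals; your plan omits this construction entirely.
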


\begin{proof}
Implication $\mathrm{(b)\Rightarrow (a)}$ is obvious and implication $%
\mathrm{(a)\Rightarrow (d)}$ follows from property (\ref{31}). Let $\mathrm{%
(a)}$ hold. Then each $P_{(s_{1},...,s_{N})}^{(\mathcal{E})}$ admits
representation (\ref{27}) specified by some probability space $(\Omega
^{\prime },\mathcal{F}_{\Omega ^{\prime }},\nu _{\mathcal{E}}^{\prime })$
and conditional distributions $P_{n}^{(s_{n})}(\cdot |\omega ^{\prime }).$
The joint probability measure%
\begin{equation}
\dint\limits_{\Omega _{\mathcal{E}}^{\prime }}\text{ }\tprod%
\limits_{s_{n},n}P_{n}^{(s_{n})}(\mathrm{d}\lambda _{n}^{(s_{n})}|\omega
^{\prime })\text{ }\nu _{\mathcal{E}}^{\prime }(\mathrm{d}\omega ^{\prime })
\label{35}
\end{equation}%
on $\Lambda _{1}^{(1)}\times ...\times \Lambda _{1}^{(S_{1})}\times
....\times \Lambda _{N}^{(1)}\times ...\times \Lambda _{N}^{(S_{N})}$
returns all distributions $P_{(s_{1},...,s_{N})}^{(\mathcal{E})}$ of family (%
\ref{25}) as marginals. Hence, $\mathrm{(a)\Rightarrow (c)}$.

Suppose that \textrm{(c)} holds. Then each $P_{(s_{1},...,s_{N})}^{(\mathcal{%
E)}}$ represents the corresponding marginal of $\mu _{\mathcal{E}}$ and this
means that, for any events $D_{n}^{(s_{n})}\subseteq \Lambda _{n}^{(s_{n})},$%
\begin{eqnarray}
&&P_{(s_{1},...,s_{N})}^{(\mathcal{E)}}(D_{1}^{(s_{1})}\times ...\times
D_{N}^{(s_{_{N}})})  \label{36} \\
&=&\dint \chi _{D_{1}^{(s_{1})}}(\lambda _{1}^{(s_{1})})\cdot ...\cdot \chi
_{D_{N}^{(s_{N})}}(\lambda _{N}^{(s_{_{N}})})\text{ }\mu _{\mathcal{E}}(%
\mathrm{d}\lambda _{1}\times ...\times \mathrm{d}\lambda _{N}),  \notag
\end{eqnarray}%
where, for short, we denote%
\begin{equation}
\lambda _{n}:=(\lambda _{n}^{(1)},...,\lambda _{n}^{(S_{n})}),\text{ \ \ \ \ 
}\Lambda _{n}:=\Lambda _{n}^{(1)}\times ...\times \Lambda _{n}^{(S_{n})}.
\label{37}
\end{equation}%
Representation (\ref{36}) constitutes a particular case of the LHV
representation (\ref{27}), specified by 
\begin{eqnarray}
\omega &=&(\lambda _{1},...,\lambda _{N}),\text{ \ \ \ \ }\Omega =\Lambda
_{1}\times ...\times \Lambda _{N},  \label{38} \\
\nu _{\mathcal{E}} &=&\mu _{\mathcal{E}},\text{ \ \ \ }%
P_{n}^{(s_{n})}(D_{n}^{(s_{n})}|\omega )=\chi _{D_{n}^{(s_{n})}}(\lambda
_{n}^{(s_{_{n}})}).  \notag
\end{eqnarray}%
and, hence, \textrm{(c)}$\mathrm{\Rightarrow (a).}$ Introducing further
measurable functions $f_{n,s_{n}}:\Omega \rightarrow \Lambda _{n}^{(s_{n})}$%
, defined by the relation $f_{n,s_{n}}(\omega ):=\lambda _{n}^{(s_{n})},$
and noting that\footnote{%
For notation $f_{n,s_{n}}^{-1}(D_{n}^{(s_{n})}),$ see (\ref{22}).} 
\begin{equation}
\chi _{D_{n}^{(s_{n})}}(\lambda _{n}^{(s_{n})})=\chi
_{f_{n,s_{n}}^{-1}(D_{n}^{(s_{n})})}(\omega ),  \label{39}
\end{equation}%
we represent (\ref{36}) in the form:%
\begin{eqnarray}
P_{(s_{1},...,s_{N})}^{(\mathcal{E)}}(D_{1}^{(s_{1})}\times ...\times
D_{N}^{(s_{_{N}})}) &=&\dint\limits_{\Omega }\chi
_{f_{1,s_{1}}^{-1}(D_{1}^{(s_{1})})}(\omega )\cdot ...\cdot \chi
_{f_{N,s_{_{N}}}^{-1}(D_{N}^{(s_{_{N}})})}(\omega )\text{ }\nu _{\mathcal{E}%
}(\mathrm{d}\omega )  \label{40} \\
&=&\nu _{\mathcal{E}}\left( f_{1,s_{1}}^{-1}(D_{1}^{(s_{1})})\cap ...\cap
f_{N,s_{_{N}}}^{-1}(D_{N}^{(s_{_{N}})})\right) .  \notag
\end{eqnarray}%
\medskip This representation for (\ref{36}) and definition 5 mean that $%
\mathrm{(c)\Rightarrow }$ $\mathrm{(b)}$. Thus, we have proved%
\begin{equation}
\mathrm{(a)\Leftrightarrow (b)\Leftrightarrow (c),}\ \ \ \mathrm{%
(a)\Rightarrow (d),}  \label{41}
\end{equation}%
and it remains only to show that \textrm{(d) }implies \textrm{(a).}

Consider $\pm 1$-valued functions $\psi _{n}^{(s_{_{n}})}(\lambda
_{n}^{(s_{_{n}})})\in \{-1,1\}$. Let $D_{n}^{(s_{n})}\subseteq \Lambda
_{n}^{(s_{_{n}})}$ be a subset where a function $\psi _{n}^{(s_{_{n}})}$
admits the value $(+1)$. The relation 
\begin{equation}
\psi _{n}^{(s_{_{n}})}(\lambda _{n}^{(s_{_{n}})})=2\chi
_{D_{n}^{(s_{n})}}(\lambda _{n}^{(s_{n})})-1  \label{42}
\end{equation}%
establishes the one-to-one correspondence between $\pm 1$-valued functions $%
\psi _{n}^{(s_{_{n}})}$ on $\Lambda _{n}^{(s_{_{n}})}$ and subsets $%
D_{n}^{(s_{n})}$ $\subseteq \Lambda _{n}^{(s_{_{n}})}$. Due to (\ref{42}),
each $\pm 1$-valued function $\psi _{n}^{(s_{_{n}})}$ on $\Lambda
_{n}^{(s_{_{n}})}$ is uniquely specified by a subset $D_{n}^{(s_{n})}%
\subseteq \Lambda _{n}^{(s_{_{n}})}$ and we replace notation $\psi
_{n}^{(s_{_{n}})}$ $\rightarrow \psi _{D_{n}^{(s_{n})}}.$ Taking (\ref{42})
into account in representation (\ref{4}), we derive:\smallskip 
\begin{equation}
P_{(s_{1},...,s_{_{N}})}^{(\mathcal{E)}}(D_{1}^{(s_{1})}\times ...\times
D_{N}^{(s_{_{N}})})=\frac{1}{2^{N}}\text{ }\left\langle \{1+\psi
_{D_{1}^{(s_{_{1}})}}(\lambda _{1}^{(s_{_{1}})})\}\cdot ...\cdot \{1+\psi
_{D_{N}^{(s_{_{N}})}}(\lambda _{N}^{(s_{_{N}})})\}\right\rangle _{\mathcal{E}%
}.  \label{43}
\end{equation}

Suppose that \textrm{(d) }holds. Then, from representation (\ref{33}) it
follows that, for each $n$ and each $s_{n},$ a correspondence between
functions $\psi _{D_{n}^{(s_{n})}}$ and $\Psi _{n}^{(s_{n})}$ is such that $%
(\Psi _{n}^{(s_{n})}(\Lambda _{n}^{(s_{n})}))(\omega )=1$ and $(\Psi
_{n}^{(s_{n})}(\varnothing ))(\omega )$ $=-1,$ $\nu _{\mathcal{E}}$-almost
everywhere on $\Omega .$

Substituting (\ref{33}) into (\ref{43}), we derive that any joint
distribution $P_{(s_{1},...,s_{N})}$ admits the LHV representation:\ 
\begin{equation}
P_{(s_{1},...,s_{N})}^{(\mathcal{E)}}(D_{1}^{(s_{1})}\times ...\times
D_{N}^{(s_{_{N}})})=\dint\limits_{\Omega
}P_{1}^{(s_{1})}(D_{1}^{(s_{1})}|\omega )\cdot ...\cdot
P_{N}^{(s_{_{N}})}(D_{N}^{(s_{_{N}})}|\omega )\text{ }\nu _{\mathcal{E}}(%
\mathrm{d}\omega ),  \label{44}
\end{equation}%
where 
\begin{equation}
P_{n}^{(s_{n})}(D_{n}^{(s_{n})}\text{ }|\text{ }\omega )=\frac{1}{2}%
\{1+(\Psi _{n}^{(s_{n})}(D_{n}^{(s_{n})}))(\omega )\}.  \label{45}
\end{equation}%
Thus, \textrm{(d)}$\mathrm{\Rightarrow (a).}$ In view of (\ref{41}), this
proves the mutual equivalence of all statements of theorem 1.
\end{proof}

\bigskip

Since different joint probability measures may have the same marginals, in
view of statement \textrm{(c)} of theorem 1, the same multipartite
correlation experiment may admit a few LHV models not reducible to each
other.

Consider a particular $N$-partite case where, say, $n$-th party performs $%
S_{n}\geq 2$ measurements while all other parties perform only one
measurement: $S_{k}=1,$ $k\neq n$. Due to reindexing of sites, any of such
cases is reduced to the $S_{1}\times 1...\times 1$-setting case.

\begin{proposition}
For an arbitrary $S_{1}\geq 2,$ any $S_{1}\times 1...\times 1$-setting
family of $N$-partite joint measurements satisfying the nonsignaling
condition (\ref{11}) admits an LHV model.
\end{proposition}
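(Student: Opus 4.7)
The plan is to invoke Theorem~1 by constructing the joint distribution demanded by its statement (c) and then reading off the LHV model from the proof of $\mathrm{(c)\Rightarrow(a)}$. The key structural point to exploit is that, since parties $2,\ldots,N$ each have a single setting, the nonsignaling condition (\ref{11}) applied to any two joint measurements $(s_{1},1,\ldots,1)$ and $(s_{1}',1,\ldots,1)$ forces the marginal
$$Q(\mathrm{d}\lambda_{2}\times\ldots\times\mathrm{d}\lambda_{N}) := P_{(s_{1},1,\ldots,1)}^{(\mathcal{E})}(\Lambda_{1}^{(s_{1})}\times\mathrm{d}\lambda_{2}\times\ldots\times\mathrm{d}\lambda_{N})$$
on $\Lambda_{2}\times\ldots\times\Lambda_{N}$ to be independent of $s_{1}$. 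This gives a common ``base'' over which every $P_{(s_{1},1,\ldots,1)}^{(\mathcal{E})}$ can be simultaneously disintegrated.

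I then disintegrate each $P_{(s_{1},1,\ldots,1)}^{(\mathcal{E})}$ against $Q$ to obtain regular conditional distributions $P^{(s_{1})}(\mathrm{d}\lambda_{1}^{(s_{1})}\mid\lambda_{2},\ldots,\lambda_{N})$ on $\Lambda_{1}^{(s_{1})}$, and define
$$\mu_{\mathcal{E}}:=\int\Bigl[\prod_{s_{1}=1}^{S_{1}}P^{(s_{1})}(\mathrm{d}\lambda_{1}^{(s_{1})}\mid\lambda_{2},\ldots,\lambda_{N})\Bigr]Q(\mathrm{d}\lambda_{2}\times\ldots\times\mathrm{d}\lambda_{N})$$
on $\Lambda_{1}^{(1)}\times\ldots\times\Lambda_{1}^{(S_{1})}\times\Lambda_{2}\times\ldots\times\Lambda_{N}$. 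Because each conditional is a probability measure, marginalizing $\mu_{\mathcal{E}}$ over every $\lambda_{1}^{(s_{1}')}$ with $s_{1}'\neq s_{1}$ collapses the corresponding factors to $1$ and returns exactly $P_{(s_{1},1,\ldots,1)}^{(\mathcal{E})}$. This verifies statement (c) of Theorem~1. Unpacking the proof of $\mathrm{(c)\Rightarrow(a)}$ in the same theorem then yields an explicit LHV model with $\Omega=\Lambda_{2}\times\ldots\times\Lambda_{N}$, $\nu_{\mathcal{E}}=Q$, $P_{1}^{(s_{1})}(\cdot\mid\omega)=P^{(s_{1})}(\cdot\mid\omega)$, and Dirac conditionals $P_{n}(\cdot\mid\omega)=\delta_{\omega_{n}}$ for $n\geq 2$.

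The main obstacle is measure-theoretic rather than conceptual: the disintegration step requires that regular conditional probabilities exist, which holds whenever the outcome spaces $\Lambda_{n}^{(s_{n})}$ are standard Borel (the implicit setting of definitions~4 and~5), and is elementary in the finite- or countable-outcome case. Once the construction is in place, no estimates are needed: the whole argument hinges on the asymmetry that only one party varies its setting, which is precisely what allows all of the distinct $P_{(s_{1},1,\ldots,1)}^{(\mathcal{E})}$ to be coupled on the same base measure $Q$.
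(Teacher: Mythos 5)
Your proposal is correct and follows essentially the same route as the paper: the nonsignaling condition forces a common marginal $Q$ (the paper's $\tau^{(\mathcal{E})}$) over sites $2,\ldots,N$, each $P_{(s_{1},1,\ldots,1)}^{(\mathcal{E})}$ is disintegrated against it via Radon--Nikodym conditionals (the paper's $\alpha_{s_{1}}^{(\mathcal{E})}$), and the product coupling of these conditionals over $Q$ yields the joint distribution required by statement (c) of Theorem~1. The only cosmetic difference is that you read off the LHV model directly from the disintegration with $\Omega=\Lambda_{2}\times\ldots\times\Lambda_{N}$ rather than via the full product space used in the paper's proof of $\mathrm{(c)\Rightarrow(a)}$.
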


\begin{proof}
For an $S_{1}\times 1...\times 1$-setting family $\mathcal{E}$ of $N$%
-partite joint measurements, each joint distribution $P_{(s_{1},1,...,1)}^{(%
\mathcal{E)}},$ $s_{1}\in \{1,...,S_{1}\},$ satisfies the relation:%
\begin{equation}
P_{(s_{1},1,...,1)}^{(\mathcal{E)}}(\Lambda _{1}^{(s_{1})}\times D^{\prime
})=0,\text{ \ \ \ \ }\mathrm{\Rightarrow }\text{\ \ \ \ \ }%
P_{(s_{1},1,...,1)}^{(\mathcal{E)}}(D_{1}^{(s_{1})}\times D^{\prime })=0,
\label{46}
\end{equation}%
for any subsets $D_{1}^{(s_{1})}\subseteq \Lambda _{1}^{(s_{1})}$ and $%
D^{\prime }\subseteq \Lambda ^{\prime }=\Lambda _{2}^{(1)}\times ....\times
\Lambda _{N}^{(1)}.$

Implication (\ref{46}) means that, for any subset $D_{1}^{(s_{1})}\subseteq
\Lambda _{1}^{(s_{1})},$ the probability distribution $P_{(s_{1},1,...,1)}^{(%
\mathcal{E})}(D_{1}^{(s_{1})}\times \mathrm{d}\lambda ^{\prime }$ $)$ of
outcomes $\lambda ^{\prime }:=(\lambda _{2}^{(1)},...,\lambda _{N}^{(1)})$
in $\Lambda ^{\prime }$ is absolutely continuous\footnote{%
On this notion and the Radon-Nikodym theorem, see, for example, [18, 19].}
with respect to the marginal $P_{(s_{1},1,...,1)}^{(\mathcal{E)}}(\Lambda
_{1}^{(s_{1})}\times \mathrm{d}\lambda ^{\prime }).$ Therefore, from the
Radon-Nikodym theorem it follows:%
\begin{eqnarray}
&&P_{(s_{1},1,...,1)}^{(\mathcal{E)}}(\mathrm{d}\lambda _{1}^{(s_{1})}\times 
\mathrm{d}\lambda _{2}^{(1)}\times ...\times \mathrm{d}\lambda _{N}^{(1)})
\label{47} \\
&=&\alpha _{s_{1}}^{(\mathcal{E})}(\mathrm{d}\lambda _{1}^{(s_{1})}|\lambda
_{2}^{(1)},...,\lambda _{N}^{(1)})\text{ }P_{(s_{1},1,...,1)}^{(\mathcal{E)}%
}(\Lambda _{1}^{(s_{1})}\times \mathrm{d}\lambda _{2}^{(1)}\times ...\times 
\mathrm{d}\lambda _{N}^{(1)}),  \notag
\end{eqnarray}%
where $\alpha _{s_{1}}^{(\mathcal{E})}(\mathrm{d}\lambda
_{1}^{(s_{1})}|\lambda _{2}^{(1)},...,\lambda _{N}^{(1)})$ is a conditional
probability distribution of outcomes in $\Lambda _{1}^{(s_{1})},$ given a
certain $(\lambda _{2}^{(1)},...,\lambda _{N}^{(1)})\in \Lambda ^{\prime }.$
Since all $N$-partite joint measurements $(s_{1},1,...,1)$ satisfy the
nonsignaling condition (\ref{11}), we have:%
\begin{eqnarray}
&&P_{(s_{1},1,...,1)}^{(\mathcal{E)}}(\Lambda _{1}^{(s_{1})}\times \mathrm{d}%
\lambda _{2}^{(1)}\times ...\times \mathrm{d}\lambda _{N}^{(1)})  \label{48}
\\
&=&P_{(s_{1}^{\prime },1,...,1)}^{(\mathcal{E)}}(\Lambda
_{1}^{(s_{1}^{\prime })}\times \mathrm{d}\lambda _{2}^{(1)}\times ...\times 
\mathrm{d}\lambda _{N}^{(1)})  \notag \\
&\equiv &\tau ^{(\mathcal{E})}(\mathrm{d}\lambda _{2}^{(1)}\times ...\times 
\mathrm{d}\lambda _{N}^{(1)}),\text{ \ \ \ }\forall s_{1},s_{1}^{\prime }\in
\{1,...,S_{1}\}.  \notag
\end{eqnarray}%
The joint probability distribution%
\begin{equation}
\left( \alpha _{1}^{(\mathcal{E)}}(\mathrm{d}\lambda _{1}^{(1)}|\lambda
_{2}^{(1)},...,\lambda _{N}^{(1)})\cdot ...\cdot \alpha _{S_{1}}^{(\mathcal{E%
})}(\mathrm{d}\lambda _{1}^{(S_{1})}|\lambda _{2}^{(1)},...,\lambda
_{N}^{(1)})\right) \tau ^{(\mathcal{E})}(\mathrm{d}\lambda _{2}^{(1)}\times
...\times \mathrm{d}\lambda _{N}^{(1)})  \label{49}
\end{equation}%
returns all distributions $P_{(s_{1},1,...,1)}^{(\mathcal{E)}},$ $%
s_{1}=1,...,S_{1},$ as the corresponding marginals. In view of implication 
\textrm{(c) }$\mathrm{\Rightarrow (a)}$ in theorem 1, this proves the
statement.
\end{proof}

\bigskip

Consider now an LHV simulation of a bipartite correlation experiment.

Due to proposition 2, for an arbitrary $S_{1}\geq 2,$ any $S_{1}\times 1$%
-setting family of bipartite joint measurements satisfying the nonsignaling
condition (\ref{11}) admits an LHV model. The existence of an LHV model for
an arbitrary $S_{1}\times S_{2}$-setting family of bipartite joint
measurements is specified by the following theorem\footnote{%
This theorem generalizes to an \emph{arbitrary} $S_{1}\times S_{2}$-setting
case, with outcomes of any spectral type, Fine's proposition 1 [7, page 292]
for the $2\times 2$-setting case with two outcomes per site.}.

\begin{theorem}
Necessary and sufficient condition for an $S_{1}\times S_{2}$-setting family
of bipartite joint measurements, with outcomes of any spectral type, to
admit an LHV model is the existence of joint probability distributions%
\footnote{%
The lower indices of measures $\mu _{\blacktriangleright }^{(s_{1})}$ on $%
\Lambda _{1}^{(s_{1})}\times \Lambda _{2}^{(1)}\times ...\times \Lambda
_{2}^{(S_{2})}$ and $\mu _{\blacktriangleleft }^{(s_{2})}$ on $\Lambda
_{1}^{(1)}\times ...\times \Lambda _{1}^{(S_{1})}\times \Lambda
_{2}^{(s_{2})}$ indicate a direction of a direct product extension of set $%
\Lambda _{1}^{(s_{1})}\times \Lambda _{2}^{(s_{2})}.$}:%
\begin{equation}
\mu _{\blacktriangleright }^{(s_{1})}(\mathrm{d}\lambda _{1}^{(s_{1})}\times 
\mathrm{d}\lambda _{2}^{(1)}\times ...\times \mathrm{d}\lambda
_{2}^{(S_{2})}),\text{ \ \ }s_{1}=1,...,S_{1},  \label{50}
\end{equation}%
such that each $\mu _{\blacktriangleright }^{(s_{1})}$ returns all
distributions $P_{(s_{1},s_{2})}^{(\mathcal{E)}},$ $s_{2}=1,...,S_{2},$ as
marginals and all $\mu _{\blacktriangleright }^{(s_{1})},$ $%
s_{1}=1,...,S_{1},$ are compatible in the sense that the relation%
\begin{equation}
\mu _{\blacktriangleright }^{(s_{1})}(\Lambda _{1}^{(s_{1})}\times \mathrm{d}%
\lambda _{2}^{(1)}\times ...\times \mathrm{d}\lambda _{2}^{(S_{2})})=\mu
_{\blacktriangleright }^{(s_{1}^{\prime })}(\Lambda _{1}^{(s_{1}^{\prime
})}\times \mathrm{d}\lambda _{2}^{(1)}\times ...\times \mathrm{d}\lambda
_{2}^{(S_{2})})  \label{51}
\end{equation}%
\medskip holds for any $s_{1},$ $s_{1}^{\prime }\in \{1,...,S_{1}\}.$ The
same concerns the existence of joint probability distributions: 
\begin{equation}
\mu _{\blacktriangleleft }^{(s_{2})}(\mathrm{d}\lambda _{1}^{(1)}\times
...\times \mathrm{d}\lambda _{1}^{(S_{1})}\times \mathrm{d}\lambda
_{2}^{(s_{2})}),\text{ \ \ }s_{2}=1,...,S_{2},  \label{52}
\end{equation}%
such that each $\mu _{\blacktriangleleft }^{(s_{2})}$ returns all
distributions $P_{(s_{1},s_{2})}^{(\mathcal{E)}},$ $s_{1}=1,...,S_{1},$ as
marginals and all $\mu _{\blacktriangleleft }^{(s_{2})}$, $%
s_{2}=1,...,S_{2}, $ satisfy the relation:%
\begin{equation}
\mu _{\blacktriangleleft }^{(s_{2})}(\mathrm{d}\lambda _{1}^{(1)}\times
...\times \mathrm{d}\lambda _{1}^{(S_{1})}\times \mathrm{d}\lambda
_{2}^{(s_{2})})=\mu _{\blacktriangleleft }^{(s_{2}^{\prime })}(\mathrm{d}%
\lambda _{1}^{(1)}\times ...\times \mathrm{d}\lambda _{1}^{(S_{1})}\times 
\mathrm{d}\lambda _{2}^{(s_{2}^{\prime })}),  \label{53}
\end{equation}%
\medskip\ for any $s_{2},$ $s_{2}^{\prime }\in \{1,...,S_{2}\}.$
\end{theorem}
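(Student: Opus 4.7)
The plan is to reduce both directions to statement \textrm{(c)} of theorem 1: the family admits an LHV model if and only if there exists a joint probability distribution on $\Lambda_{1}^{(1)}\times \ldots \times \Lambda_{1}^{(S_{1})}\times \Lambda_{2}^{(1)}\times \ldots \times \Lambda_{2}^{(S_{2})}$ returning all $P_{(s_{1},s_{2})}^{(\mathcal{E})}$ as marginals. I will carry out the argument for the $\mu_{\blacktriangleright}^{(s_{1})}$ version; the $\mu_{\blacktriangleleft}^{(s_{2})}$ version is strictly symmetric.

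For necessity, suppose the family admits an LHV model. By theorem 1, there is a joint measure $\mu_{\mathcal{E}}$ on the full product space returning every $P_{(s_{1},s_{2})}^{(\mathcal{E})}$ as a marginal. For each $s_{1}\in\{1,\ldots,S_{1}\}$, define $\mu_{\blacktriangleright}^{(s_{1})}$ as the marginal of $\mu_{\mathcal{E}}$ on $\Lambda_{1}^{(s_{1})}\times \Lambda_{2}^{(1)}\times \ldots \times \Lambda_{2}^{(S_{2})}$, that is, by integrating out all variables $\lambda_{1}^{(s_{1}')}$ with $s_{1}'\neq s_{1}$. By construction $\mu_{\blacktriangleright}^{(s_{1})}$ returns every $P_{(s_{1},s_{2})}^{(\mathcal{E})}$ (with that fixed $s_{1}$) as a marginal, and integrating out the remaining $\lambda_{1}^{(s_{1})}$ yields a marginal of $\mu_{\mathcal{E}}$ on the $\Lambda_{2}$-factors that does not involve $s_{1}$; this gives compatibility condition (\ref{51}).

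For sufficiency, assume measures $\mu_{\blacktriangleright}^{(s_{1})}$ as in the statement exist. By (\ref{51}) all of them share a common marginal
\[
\tau_{\mathcal{E}}(\mathrm{d}\lambda_{2}^{(1)}\times \ldots \times \mathrm{d}\lambda_{2}^{(S_{2})}) := \mu_{\blacktriangleright}^{(s_{1})}(\Lambda_{1}^{(s_{1})}\times \mathrm{d}\lambda_{2}^{(1)}\times \ldots \times \mathrm{d}\lambda_{2}^{(S_{2})})
\]
on $\Lambda_{2}^{(1)}\times \ldots \times \Lambda_{2}^{(S_{2})}$. Arguing exactly as in the proof of proposition 2, each $\mu_{\blacktriangleright}^{(s_{1})}(D_{1}^{(s_{1})}\times \mathrm{d}\lambda_{2}^{(1)}\times \ldots)$ is absolutely continuous with respect to $\tau_{\mathcal{E}}$, and the Radon--Nikodym theorem yields a conditional probability distribution $\alpha_{s_{1}}^{(\mathcal{E})}(\mathrm{d}\lambda_{1}^{(s_{1})}\mid \lambda_{2}^{(1)},\ldots,\lambda_{2}^{(S_{2})})$ factorizing $\mu_{\blacktriangleright}^{(s_{1})}$. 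Define
\[
\mu_{\mathcal{E}} := \Bigl(\prod_{s_{1}=1}^{S_{1}} \alpha_{s_{1}}^{(\mathcal{E})}(\mathrm{d}\lambda_{1}^{(s_{1})}\mid \lambda_{2}^{(1)},\ldots,\lambda_{2}^{(S_{2})})\Bigr)\, \tau_{\mathcal{E}}(\mathrm{d}\lambda_{2}^{(1)}\times \ldots \times \mathrm{d}\lambda_{2}^{(S_{2})})
\]
on the full product space. Integrating out all $\lambda_{1}^{(s_{1}')}$ with $s_{1}'\neq s_{1}$ recovers $\mu_{\blacktriangleright}^{(s_{1})}$, which in turn returns every $P_{(s_{1},s_{2})}^{(\mathcal{E})}$ as a further marginal. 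Hence $\mu_{\mathcal{E}}$ is the joint distribution required by statement \textrm{(c)} of theorem 1, and an LHV model exists.

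The main obstacle is the sufficiency direction and, within it, the gluing step: compatibility along the ``$\blacktriangleright$'' direction only gives a \emph{single} common marginal on the $\Lambda_{2}$-factors, so one must show this is enough to stitch the $S_{1}$ separate measures into one joint distribution over \emph{all} $\lambda_{1}^{(s_{1})}$. This is precisely where the Radon--Nikodym factorization against the common marginal $\tau_{\mathcal{E}}$, followed by a product of conditional distributions in the ``hidden'' index $s_{1}$, is the right tool; verifying measurability of the conditional kernels and checking that all bipartite marginals of $\mu_{\mathcal{E}}$ equal the prescribed $P_{(s_{1},s_{2})}^{(\mathcal{E})}$ are the technical points to be carried out carefully.
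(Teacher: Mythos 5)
Your proposal is correct and follows essentially the same route as the paper: necessity via the marginals of the full joint measure guaranteed by statement (c) of theorem 1, and sufficiency via the Radon--Nikodym factorization of each $\mu _{\blacktriangleright }^{(s_{1})}$ against the common marginal $\tau $ followed by taking the product of the conditional kernels, exactly as in the paper's construction (\ref{56})--(\ref{58}). The only cosmetic difference is that you explicitly flag the gluing step as the crux, which the paper handles identically but without comment.
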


\begin{proof}
Denote, for short,%
\begin{equation}
\lambda _{2}:=(\lambda _{2}^{(1)},...,\lambda _{2}^{(S_{2})}),\text{ \ \ \ }%
\Lambda _{2}:=\Lambda _{2}^{(1)}\times ...\times \Lambda _{2}^{(S_{2})}.
\label{54}
\end{equation}%
For each distribution $\mu _{\blacktriangleright }^{(s_{1})}(\mathrm{d}%
\lambda _{1}^{(s_{1})}\times \mathrm{d}\lambda _{2})$ in (\ref{50}), the
relation%
\begin{equation}
\mu _{\blacktriangleright }^{(s_{1})}(\Lambda _{1}^{(s_{1})}\times D_{2})=0%
\text{ \ \ }\mathrm{\Rightarrow }\text{ \ \ }\mu _{\blacktriangleright
}^{(s_{1})}(D_{1}^{(s_{1})}\times D_{2})=0  \label{55}
\end{equation}%
holds for any subsets $D_{1}^{(s_{1})}\subseteq \Lambda _{1}^{(s_{1})}$ and $%
D_{2}\subseteq \Lambda _{2}$. This means that, for any $D_{1}^{(s_{1})}%
\subseteq \Lambda _{1}^{(s_{1})},$ the probability measure $\mu
_{\blacktriangleright }^{(s_{1})}(D_{1}^{(s_{1})}\times \mathrm{d}\lambda
_{2}$ $)$ of outcomes in $\Lambda _{2}$ is absolutely continuous\footnote{%
See reference in footnote 23.} with respect to the marginal probability
distribution $\ \mu _{\blacktriangleright }^{(s_{1})}(\Lambda
_{1}^{(s_{1})}\times \mathrm{d}\lambda _{2}).$ Therefore, each $\mu
_{\blacktriangleright }^{(s_{1})}$admits the Radon-Nikodym representation:%
\begin{equation}
\mu _{\blacktriangleright }^{(s_{1})}(\mathrm{d}\lambda _{1}^{(s_{1})}\times 
\mathrm{d}\lambda _{2})=\alpha _{1}^{(s_{1})}(\mathrm{d}\lambda
_{1}^{(s_{1})}|\lambda _{2})\mu _{\blacktriangleright }^{(s_{1})}(\Lambda
_{1}^{(s_{1})}\times \mathrm{d}\lambda _{2}),  \label{56}
\end{equation}%
where $\alpha _{1}^{(s_{1})}(\cdot |\lambda _{2})$ is a conditional
probability distribution of outcomes $\lambda _{1}^{(s_{1})}\in \Lambda
_{1}^{(s_{1})}.$ In view of (\ref{51}), we denote 
\begin{eqnarray}
\mu _{\blacktriangleright }^{(s_{1})}(\Lambda _{1}^{(s_{1})}\times \mathrm{d}%
\lambda _{2}) &=&\mu _{\blacktriangleright }^{(s_{1}^{\prime })}(\Lambda
_{1}^{(s_{1}^{\prime })}\times \mathrm{d}\lambda _{2})  \label{57} \\
&=&\tau _{2}(\mathrm{d}\lambda _{2}),\text{ \ \ \ \ }s_{1},\text{ }%
s_{1}^{\prime }\in \{1,...,S_{1}\}.  \notag
\end{eqnarray}%
The joint probability measure%
\begin{equation}
\left( \alpha _{1}^{(1)}(\mathrm{d}\lambda _{1}^{(1)}|\lambda _{2})\cdot
...\cdot \alpha _{1}^{(S_{1})}(\mathrm{d}\lambda _{1}^{(S_{1})}|\lambda
_{2})\right) \tau _{2}(\mathrm{d}\lambda _{2})  \label{58}
\end{equation}%
returns all $P_{(s_{1},s_{2})}^{(\mathcal{E)}}$ as marginals. In view of
theorem 1, this proves the sufficiency part of theorem 2.

In order to prove the necessity part, let an $S_{1}\times S_{2}$-setting
family admit a LHV model. Then, by statement \textrm{(c)} of theorem 1,
there exists a joint probability distribution $\mu _{\mathcal{E}}(d\lambda
_{1}^{(1)}\times ...\times d\lambda _{1}^{(S_{1})}\times d\lambda
_{2}^{(1)}\times ...\times d\lambda _{2}^{(S_{2})})$ of all outcomes
observed by two parties. The marginals 
\begin{equation}
\mu _{\mathcal{E}}(\Lambda _{1}^{(1)}\times ..\times \Lambda
_{1}^{(s_{1}-1)}\times \mathrm{d}\lambda _{1}^{(s_{1})}\times \Lambda
_{1}^{(s_{1}+1)}...\times \Lambda _{1}^{(S_{1})}\times \mathrm{d}\lambda
_{2}^{(1)}\times ...\times \mathrm{d}\lambda _{2}^{(S_{2})}),  \label{59}
\end{equation}%
constitute the probability distributions $\mu _{\blacktriangleright
}^{(s_{1})},$ specified by (\ref{50}), (\ref{51}). For measures $\mu
_{\blacktriangleleft }^{(s_{2})},$ the necessity and sufficiency parts are
proved quite similarly.
\end{proof}

\medskip

Theorems 1, 2 and proposition 2 refer to an LHV simulation of an arbitrary
multipartite correlation experiment with outcomes of any spectral type.
Below, we consider peculiarities of an LHV simulation in a multipartite case
with only two outcomes per site.

\subsection{ \ A dichotomic multipartite case}

Let, under an $N$-partite joint measurement $(s_{1},...,s_{N})$, each party
perform a measurement with only two outcomes, that is, a \emph{dichotomic}
measurement. These two outcomes do not need to be numbers, however, due to
possible mappings $\lambda _{n}^{(s_{n})}\mapsto \varphi
_{n}^{(s_{n})}(\lambda _{n}^{(s_{n})})\in \{-1,1\},$ it suffices to analyse
only a dichotomic case with outcomes: $\lambda _{n}^{(s_{n})}=\pm 1.$

Since the direct product $\{\lambda _{1}^{(s_{1})}\}\times ...\times
\{\lambda _{N}^{(s_{N})}\}$ of one-point subsets constitutes the one-point
subset $\{(\lambda _{1}^{(s_{1})},...,\lambda _{N}^{(s_{_{N}})})\}\subset
\Lambda _{1}^{(s_{1})}\times ...\times \Lambda _{N}^{(s_{_{N}})}$, for a
discrete case, we further omit brackets $\{\cdot \}$ and denote:%
\begin{equation}
P_{(s_{_{1}},...,s_{_{N}})}^{(\mathcal{E)}}\text{ }(\{\lambda
_{1}^{(s_{1})}\}\times ...\times \{\lambda _{N}^{(s_{N})}\})\equiv
P_{(s_{_{1}},...,s_{_{N}})}^{(\mathcal{E)}}(\lambda
_{1}^{(s_{1})},...,\lambda _{N}^{(s_{_{N}})}).  \label{60}
\end{equation}%
For a further consideration, we need to prove the following general
statement.

\begin{lemma}
For an arbitrary $N$-partite joint measurement $(s_{1},...,s_{N})\in 
\mathcal{E},$ with $\pm 1$-valued outcomes at each site, 
\begin{eqnarray}
&&2P_{(s_{_{1}},...,s_{_{N}})}^{(\mathcal{E)}}(\lambda
_{1}^{(s_{_{1}})},...,\lambda _{N}^{(s_{_{N}})})  \label{61} \\
&=&1\text{ \ \ }+\sum_{\substack{ 1\leq n_{1}<...<n_{_{N-k}}\leq N,  \\ %
k=0,...,N-1}}\xi (\lambda _{n_{_{1}}}^{(s_{n_{_{1}}})})\cdot ...\cdot \xi
(\lambda _{n_{_{N-k}}}^{(s_{n_{_{N-k}}})})\text{ }\left\langle \lambda
_{n_{_{1}}}^{(s_{n_{_{1}}})}\cdot ...\cdot \lambda
_{n_{_{N-k}}}^{(s_{_{n_{_{N-k}}}})}\right\rangle _{\mathcal{E}},  \notag
\end{eqnarray}%
where $\xi (\pm 1)=\pm 1.$
\end{lemma}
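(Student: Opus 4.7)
The plan is to recognize (\ref{61}) as the Walsh--Fourier expansion of the atomic probability $P_{(s_1,\ldots,s_N)}^{(\mathcal{E})}(\lambda_1^{(s_1)},\ldots,\lambda_N^{(s_N)})$ on the discrete hypercube $\{-1,1\}^N$. The proof rests on the elementary indicator identity
\[
\chi_{\{a\}}(b)\;=\;\tfrac{1}{2}\bigl(1 + a\,b\bigr),\qquad a,b\in\{-1,1\},
\]
which is immediate by inspection: both sides equal $1$ when $a=b$ and $0$ otherwise. The function $\xi$ in the statement is simply the embedding of the $\pm1$ outcome \emph{labels} into the real numbers $\pm 1$, and plays no deeper role in the argument.

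First I would rewrite the atomic probability as an expectation of a product of indicators using (\ref{4}):
\[
P_{(s_1,\ldots,s_N)}^{(\mathcal{E})}\bigl(\lambda_1^{(s_1)},\ldots,\lambda_N^{(s_N)}\bigr)
\;=\;\Bigl\langle\,\chi_{\{\lambda_1^{(s_1)}\}}(\mu_1)\cdots\chi_{\{\lambda_N^{(s_N)}\}}(\mu_N)\,\Bigr\rangle_{\mathcal{E}},
\]
where each $\mu_n\in\{-1,1\}$ is the integration variable for the $s_n$-th measurement at site $n$, while the subscripted $\lambda_n^{(s_n)}$ are the \emph{fixed} $\pm 1$ values at which the distribution is being evaluated.

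Next I would substitute the key identity into each of the $N$ factors and expand the resulting product by distributivity,
\[
\prod_{n=1}^{N}\bigl(1 + \xi(\lambda_n^{(s_n)})\,\mu_n\bigr)
\;=\;\sum_{T\subseteq\{1,\ldots,N\}}\;\prod_{n\in T}\xi(\lambda_n^{(s_n)})\cdot\prod_{n\in T}\mu_n.
\]
Using linearity of the expectation (formula (\ref{3})), the fixed factors $\xi(\lambda_n^{(s_n)})$ come outside the bracket, while the residual expectation $\langle\prod_{n\in T}\mu_n\rangle_{\mathcal{E}}$ is by definition (\ref{5}) the correlation function at the sites in $T$. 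The empty subset $T=\varnothing$ contributes the constant, and the nonempty subsets of cardinality $N-k$ (for $k=0,\ldots,N-1$), reindexed by their ordered elements $1\le n_1<\cdots<n_{N-k}\le N$, match exactly the double sum displayed in (\ref{61}).

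No step presents a genuine obstacle; the argument is essentially bookkeeping once the indicator identity has been substituted. The one point that requires mild care is the overall normalization produced by the $N$ factors of $\tfrac{1}{2}$, which must be tracked in order to match the constant term $1$ and the coefficients on the right-hand side; a sanity check at $N=1$ (where the identity reduces immediately to $2P(\lambda)=1+\lambda\langle\lambda\rangle_{\mathcal{E}}$) fixes the normalization and confirms the form of the expansion.
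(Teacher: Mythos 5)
Your proof is correct and follows essentially the same route as the paper's: both substitute the indicator identity $\chi_{\{a\}}(b)=\tfrac{1}{2}(1+\xi(a)\,b)$ into representation (\ref{4}) and expand the product over sites, collecting the subsets $T\subseteq\{1,\ldots,N\}$ into correlation functions. The one remark worth making is that carefully tracking the $N$ factors of $\tfrac{1}{2}$ yields $2^{N}P$ on the left-hand side (exactly as in the paper's own intermediate formula (\ref{64}) and its consequence (\ref{65})), so the coefficient $2$ printed in (\ref{61}) is a typo in the statement that your $N=1$ sanity check happens not to detect.
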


\begin{proof}
Due to relations%
\begin{equation}
2\chi _{_{\{1\}}}(\lambda _{n}^{(s_{n})})-1=\lambda _{n}^{(s_{n})},\text{ \
\ \ \ }2\chi _{_{\{-1\}}}(\lambda _{n}^{(s_{n})})-1=-\lambda _{n}^{(s_{n})},
\label{62}
\end{equation}%
holding for each $\lambda _{n}^{(s_{n})}\in \{-1,1\},$ we have:\medskip 
\begin{equation}
\chi _{_{D_{n}^{(s_{n})}}}(\lambda _{n}^{(s_{n})})=\frac{1+\lambda
_{n}^{(s_{n})}\xi (D_{n}^{(s_{n})})}{2},\text{ \ \ \ }\xi (\{1\})=1,\ \ \
\xi (\{-1\})=-1,  \label{63}
\end{equation}%
\medskip for each of one-point subsets $\{-1\}\ $or $\{1\}$.

Substituting (\ref{63}) into (\ref{4}), for any direct product combination $%
D_{1}^{(s_{1})}\times ...\times D_{N}^{(s_{_{N}})}$ of one-point subsets $%
\{-1\}$ and $\{1\},$we derive:\medskip 
\begin{eqnarray}
&&P_{(s_{_{1}},...,s_{_{N}})}^{(\mathcal{E)}}(D_{1}^{(s_{1})}\times
...\times D_{N}^{(s_{_{N}})})  \label{64} \\
&=&\frac{1}{2^{N}}\text{ }\left\langle (1+\lambda _{1}^{(s_{1})}\xi
(D_{1}^{(s_{1})}))\cdot ...\cdot (1+\lambda _{N}^{(s_{_{N}})}\xi
(D_{N}^{(s_{_{N}})}))\right\rangle _{\mathcal{E}}\text{ }  \notag \\
&=&\frac{1}{2^{N}}+\frac{1}{2^{N}}\sum_{\substack{ 1\leq
n_{1}<...<n_{N-k}\leq N,  \\ k=0,...,N-1}}\xi (D_{n_{1}}^{(s_{n_{1}})})\cdot
...\cdot \xi (D_{n_{_{N-k}}}^{(s_{n_{_{N-k}}})})\text{ }\left\langle \lambda
_{n_{_{1}}}^{(s_{n_{1}})}\cdot ...\cdot \lambda
_{n_{_{N-k}}}^{(s_{_{n_{_{N-k}}}})}\right\rangle _{\mathcal{E}}.  \notag
\end{eqnarray}%
\bigskip Using in (\ref{64}) notation (\ref{60}) and renaming $\xi
(\{1\})\rightarrow \xi (1),$ $\xi (\{-1\})\rightarrow \xi (-1),$ we prove (%
\ref{61}).
\end{proof}

\bigskip

From (\ref{61}) it, in particular, follows:%
\begin{equation}
2^{N}P_{(s_{_{1}},...,s_{_{N}})}^{(\mathcal{E)}}(1,...,1)=1+\sum_{\substack{ %
1\leq n_{1}<...<n_{_{N-k}}\leq N,  \\ k=0,...,N-1}}\left\langle \lambda
_{n_{_{1}}}^{(s_{n_{1}})}\cdot ...\cdot \lambda
_{n_{_{N-k}}}^{(s_{_{n_{_{N-k}}}})}\right\rangle _{\mathcal{E}}.  \label{65}
\end{equation}

In view of lemma 1, the mutual equivalence of statements $\mathrm{(a)}$ and%
\textrm{\ }$\mathrm{(}$\textrm{d) }of theorem 1 takes the following form.

\begin{theorem}
An $S_{1}\times ...\times S_{N}$-setting family (\ref{25}) of $N$-partite
joint measurements, with $\pm 1$-valued outcomes at each site, admits an LHV
model, formulated by definition 4, iff there exist a probability space $%
(\Omega ,\mathcal{F}_{\Omega },\nu _{\mathcal{E}})$ and $\nu _{\mathcal{E}}$%
-measurable real-valued functions 
\begin{equation}
f_{n,s_{n}}:\Omega \rightarrow \mathbb{[}-1,1\mathbb{]},\text{ \ \ \ }%
\forall s_{n},\forall n,  \label{66}
\end{equation}%
on $(\Omega ,\mathcal{F}_{\Omega })$ such that any of the mean values:%
\begin{equation}
\left\langle \lambda _{n_{_{1}}}^{(s_{n_{_{1}}})}\cdot ...\cdot \lambda
_{n_{_{M}}}^{(s_{_{n_{M}}})}\right\rangle _{\mathcal{E}},\text{\ \ \ }1\leq
n_{1}<...<n_{M}\leq N,\text{ \ \ \ }1\leq M\leq N,  \label{67}
\end{equation}%
admits the representation%
\begin{equation}
\left\langle \lambda _{n_{_{1}}}^{(s_{n_{1}})}\cdot ...\cdot \lambda
_{n_{_{M}}}^{(s_{_{n_{M}}})}\right\rangle _{\mathcal{E}}=\dint
f_{n_{_{1}},s_{n_{_{1}},}}(\omega )\cdot ...\cdot
f_{n_{_{_{M}}},s_{n_{_{M}}}}(\omega )\text{ }\nu _{\mathcal{E}}(\mathrm{d}%
\omega )  \label{68}
\end{equation}%
of the LHV-form.
\end{theorem}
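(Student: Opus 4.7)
The plan is to view Theorem 3 as the dichotomic specialization of the equivalence $\mathrm{(a)}\Leftrightarrow \mathrm{(d)}$ in Theorem 1, with Lemma 1 providing the bridge from correlation functions back to joint probabilities. For the $(\Rightarrow)$ direction I would argue directly from Definition 4: if the family admits an LHV model (\ref{27}), then by property (\ref{31}) every full- or partial-site mean value factorizes, and the natural choice $f_{n,s_n}(\omega):=\int\lambda_n^{(s_n)}P_n^{(s_n)}(\mathrm{d}\lambda_n^{(s_n)}|\omega)=P_n^{(s_n)}(1|\omega)-P_n^{(s_n)}(-1|\omega)$ is $\nu_{\mathcal{E}}$-measurable and takes values in $[-1,1]$, giving exactly (\ref{68}). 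Equivalently, one can invoke statement \textrm{(d)} of Theorem 1 with the identity map as the $\pm 1$-valued function $\psi_n^{(s_n)}$ and set $f_{n,s_n}:=\Psi_n^{(s_n)}$.

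For the $(\Leftarrow)$ direction, assume (\ref{68}) with functions $f_{n,s_n}$ as specified. The goal is to recover the LHV form (\ref{27}) of Definition 4. Since each $\lambda_n^{(s_n)}\in\{-1,1\}$, the function $\xi$ of Lemma 1 satisfies $\xi(\lambda_n^{(s_n)})=\lambda_n^{(s_n)}$, so substituting (\ref{68}) into the right-hand side of (\ref{61}) yields
\begin{equation*}
2^{N}P_{(s_{1},...,s_{N})}^{(\mathcal{E})}(\lambda_{1}^{(s_{1})},...,\lambda_{N}^{(s_{N})})=\dint\Bigl[1+\sum_{\substack{1\leq n_{1}<...<n_{N-k}\leq N \\ k=0,...,N-1}}\lambda_{n_{1}}^{(s_{n_{1}})}f_{n_{1},s_{n_{1}}}(\omega)\cdot...\cdot\lambda_{n_{N-k}}^{(s_{n_{N-k}})}f_{n_{N-k},s_{n_{N-k}}}(\omega)\Bigr]\nu_{\mathcal{E}}(\mathrm{d}\omega).
\end{equation*}
The bracketed expression is exactly the expansion of $\prod_{n=1}^{N}\bigl(1+\lambda_{n}^{(s_{n})}f_{n,s_{n}}(\omega)\bigr)$, so
\begin{equation*}
P_{(s_{1},...,s_{N})}^{(\mathcal{E})}(\lambda_{1}^{(s_{1})},...,\lambda_{N}^{(s_{N})})=\dint\prod_{n=1}^{N}\frac{1+\lambda_{n}^{(s_{n})}f_{n,s_{n}}(\omega)}{2}\,\nu_{\mathcal{E}}(\mathrm{d}\omega).
\end{equation*}

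To conclude, define the conditional probabilities $P_{n}^{(s_{n})}(\pm 1\,|\,\omega):=\bigl(1\pm f_{n,s_{n}}(\omega)\bigr)/2$; since $f_{n,s_{n}}(\omega)\in[-1,1]$ these are nonnegative, they sum to one, and each depends only on the setting at site $n$. The displayed identity is then the LHV representation (\ref{27}) of Definition 4 for the one-point subsets, which is enough in the discrete case noted after Definition 4. Thus (\ref{68}) implies \textrm{(a)}, closing the equivalence.

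I do not expect any serious obstacle: the $(\Rightarrow)$ direction is essentially a rereading of property (\ref{31}), and the $(\Leftarrow)$ direction reduces to the combinatorial observation that Lemma 1's expansion over subsets $\{n_{1}<...<n_{N-k}\}$ is the expanded product $\prod_{n}(1+\lambda_{n}^{(s_{n})}f_{n,s_{n}}(\omega))$. The only point requiring care is verifying that the resulting conditional distributions lie in $[0,1]$, which is guaranteed precisely by the range constraint $f_{n,s_{n}}(\omega)\in[-1,1]$ built into the hypothesis.
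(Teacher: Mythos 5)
Your proof is correct and follows essentially the same route as the paper: necessity via property (\ref{31}) with $f_{n,s_{n}}(\omega)=P_{n}^{(s_{n})}(\{1\}|\omega)-P_{n}^{(s_{n})}(\{-1\}|\omega)$, and sufficiency by substituting (\ref{68}) into the expansion of Lemma 1 (in the form (\ref{64})), recognizing the product $\prod_{n}(1+\lambda_{n}^{(s_{n})}f_{n,s_{n}}(\omega))$, and defining $P_{n}^{(s_{n})}(\{\pm 1\}|\omega)=\tfrac{1}{2}[1\pm f_{n,s_{n}}(\omega)]$, which the range constraint keeps in $[0,1]$. This matches the paper's argument step for step.
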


\begin{proof}
The necessity follows from property 2 (see formula (\ref{31})). In order to
prove the sufficiency part, let us substitute (\ref{68}) into formula (\ref%
{61}), in the form (\ref{64}). For any direct product combination $%
D_{1}^{(s_{1})}\times ...\times D_{N}^{(s_{_{N}})}$ of one-point subsets $%
\{-1\}$ and $\{1\},$ we derive:%
\begin{eqnarray}
&&P_{(s_{_{1}},...,s_{_{N}})}^{(\mathcal{E)}}(D_{1}^{(s_{1})}\times
...\times D_{N}^{(s_{_{N}})})  \label{69} \\
&=&\frac{1}{2^{N}}\dint \text{ }[1+\xi (D_{1}^{(s_{1})})f_{1,s_{1}}(\omega
)]\cdot ...\mathbf{\cdot }\text{ }[1+\xi
(D_{N}^{(s_{_{N}})})f_{N,s_{_{N}}}(\omega )]\text{ }\nu _{\mathcal{E}}(%
\mathrm{d}\omega ).  \notag
\end{eqnarray}%
Extending (\ref{69}) to all subsets of set $\{-1,1\},$ we have:%
\begin{equation}
P_{(s_{_{1}},...,s_{_{N}})}^{(\mathcal{E)}}(D_{1}^{(s_{1})}\times ...\times
D_{N}^{(s_{_{N}})})=\int P_{1}^{(s_{1})}(D_{1}^{(s_{1})}\text{ }|\text{ }%
\omega )\cdot ...\cdot P_{N}^{(s_{_{N}})}(D_{N}^{(s_{_{N}})}\text{ }|\text{ }%
\omega )\text{ }\nu _{\mathcal{E}}(\mathrm{d}\omega ),  \label{70}
\end{equation}%
where%
\begin{eqnarray}
P_{n}^{(s_{n})}(\{1\}\text{ }|\text{ }\omega ) &=&\frac{1}{2}%
[1+f_{n,s_{n}}(\omega )],\text{ \ \ \ \ }P_{n}^{(s_{n})}(\{-1\}\text{ }|%
\text{ }\omega )=\frac{1}{2}[1-f_{n,s_{n}}(\omega )],  \label{71} \\
P_{n}^{(s_{n})}(\varnothing \text{ }|\text{ }\omega ) &=&0,\text{ \ \ \ \ \ }%
P_{n}^{(s_{n})}(\{-1,1\}\text{ }|\text{ }\omega )=1.  \notag
\end{eqnarray}%
This proves the statement.
\end{proof}

\bigskip

From theorem 3 it follows that, for an arbitrary $S_{1}\times ...\times
S_{N} $-setting family of $N$-partite joint measurements with two outcomes
per site, the existence of the LHV-form representation (\ref{68}) for \emph{%
only} the full correlation functions \emph{does not}, in general, imply the
existence of an LHV model (\ref{27}) for joint probability distributions.

All statements of section 4 refer to an LHV simulation of a general
correlation experiment. In the following section, we specify an LHV
simulation in a quantum multipartite case.

\section{Quantum LHV models}

We start by analysing an LHV simulation of an $S_{1}\times S_{2}$-setting
family of \emph{EPR local} bipartite joint measurements performed on a
separable quantum state: 
\begin{equation}
\rho _{sep}=\sum_{m}\gamma _{m}\rho _{1}^{(m)}\otimes \rho _{2}^{(m)},\text{
\ \ \ }\gamma _{m}\geq 0,\text{ \ \ \ }\sum_{m}\gamma _{m}=1,  \label{72}
\end{equation}%
on a complex separable Hilbert space $\mathcal{H}\otimes \mathcal{H}$,
possibly, infinite dimensional.

Let, at each $n$-th site, quantum measurements be described by POV measures $%
\mathrm{M}_{n}^{(s_{n})}(\mathrm{d}\lambda _{n}^{(s_{1})})$, $%
s_{n}=1,...,S_{n}$, $n=1,2.$ From (\ref{24}) and (\ref{72}) it follows that
this correlation experiment is described by the joint probability
distributions of the form%
\begin{equation}
P_{(s_{1},s_{2})}(\mathrm{d}\lambda _{1}^{(s_{1})}\times \mathrm{d}\lambda
_{2}^{(s_{2})}\text{ }|\text{ }\rho _{sep})=\sum_{m}\gamma _{m}\mathrm{tr}%
[\rho _{1}^{(m)}\mathrm{M}_{1}^{(s_{1})}(\mathrm{d}\lambda _{1}^{(s_{1})})]%
\text{ }\mathrm{tr}[\rho _{2}^{(m)}\mathrm{M}_{2}^{(s_{2})}(\mathrm{d}%
\lambda _{2}^{(s_{2})})].  \label{73}
\end{equation}%
This form constitutes a particular case of the LHV representation (\ref{27}%
), specified by the probability space with elements 
\begin{equation}
\Omega ^{\prime }=\{m=1,2,....\},\ \ \ \ \ \ \nu _{m}^{\prime }=\gamma _{m},%
\text{ \ \ }\forall m\in \Omega ^{\prime },  \label{74}
\end{equation}%
and conditional distributions $P_{n}^{(s_{n})}(\mathrm{\cdot }$ $|m)=\mathrm{%
tr}[\rho _{n}^{(m)}\mathrm{M}_{n}^{(s_{n})}(\mathrm{\cdot })],$ $%
s_{n}=1,...,S_{n},$ $n=1,2,$ for any $m\in \Omega ^{\prime }.$

Thus, any $S_{1}\times S_{2}$-setting family of bipartite joint measurements
performed on a separable quantum state $\rho _{sep}$ admits the LHV model
where the probability space is determined only by this separable state and
does not depend on either numbers or settings of parties' measurements, that
is, the LHV model of the classical type (see section 4).

Furthermore, all $P_{(s_{1},s_{2})}^{(\mathcal{E})}(\mathrm{\cdot }|$ $\rho
_{sep})$, $s_{1}=1,...,S_{1},$ $s_{2}=1,...,S_{2},$ defined by (\ref{73}),
are marginals of the joint probability measure%
\begin{eqnarray}
&&\mu _{\rho _{sep}}(\mathrm{d}\lambda _{1}^{(1)}\times ...\times \mathrm{d}%
\lambda _{1}^{(S_{1})}\times \mathrm{d}\lambda _{2}^{(1)}\times ...\times 
\mathrm{d}\lambda _{2}^{(S_{2})})  \label{75} \\
&=&\sum_{m}\gamma _{m}\dprod\limits_{s_{1},\text{ }s_{2}}\mathrm{tr}[\rho
_{1}^{(m)}\mathrm{M}_{1}^{(s_{1})}(\mathrm{d}\lambda _{1}^{(s_{1})})]\text{ }%
\mathrm{tr}[\rho _{2}^{(m)}\mathrm{M}_{2}^{(s_{2})}(\mathrm{d}\lambda
_{2}^{(s_{2})})].  \notag
\end{eqnarray}%
Therefore, from the proof of implication $\mathrm{(c)\Rightarrow (a)}$ in
theorem 1 (see representation (\ref{36})) it follows that the considered
correlation experiment admits also the LHV model which is specified by the
probability space $(\Omega ,\mathcal{F}_{\Omega },\mu _{\rho _{sep}}),$ with 
\begin{eqnarray}
\omega &=&(\lambda _{1}^{(1)},...,\lambda _{1}^{(S_{1})},\lambda
_{2}^{(1)},...,\lambda _{2}^{(S_{2})}),  \label{76} \\
\Omega &=&\Lambda _{1}^{(1)}\times ...\times \Lambda _{1}^{(S_{1})}\times
\Lambda _{2}^{(1)}\times ...\times \Lambda _{2}^{(S_{2})},  \notag
\end{eqnarray}%
and conditional distributions $P_{n}^{(s_{n})}(D_{n}^{(s_{n})}|\omega )=\chi
_{D_{n}^{(s_{n})}}(\omega ).$ The latter LHV model is induced by the LHV
model (\ref{74}).

Consider further an $S_{1}\times S_{2}$-setting bipartite correlation
experiment, performed on the \emph{specific} bipartite separable state%
\begin{equation}
\widetilde{\rho }_{sep}=\sum_{m}\gamma _{m}|e_{m}\rangle \langle
e_{m}|\otimes |e_{m}\rangle \langle e_{m}|,  \label{77}
\end{equation}%
where $\{e_{m}\}$ is an orthonormal basis in $\mathcal{H}$. Since state $%
\widetilde{\rho }_{sep}$ is reduced from the nonseparable pure state 
\begin{equation}
T=|\text{ }\sum_{m}\sqrt{\gamma _{m}}e_{m}^{\otimes (S_{1}+S_{2}}\text{ }%
\rangle \langle \text{ }\sum_{m}\sqrt{\gamma _{m}}e_{m}^{\otimes
(S_{1}+S_{2})}\text{ }|  \label{78}
\end{equation}%
on $\mathcal{H}^{\otimes (S_{1}+S_{2})},$ all distributions $%
P_{(s_{1},s_{2})}(\cdot $ $|$ $\widetilde{\rho }_{sep})$ represent marginals
of the joint measure\medskip 
\begin{eqnarray}
&&\mu _{\widetilde{\rho }_{sep}}^{\prime }(\mathrm{d}\lambda
_{1}^{(1)}\times ...\times \mathrm{d}\lambda _{1}^{(S_{1})}\times \mathrm{d}%
\lambda _{2}^{(1)}\times ...\times \mathrm{d}\lambda _{2}^{(S_{2})})
\label{79} \\
&=&\mathrm{tr}[T\{\mathrm{M}_{1}^{(1)}(\mathrm{d}\lambda _{1}^{(1)})\otimes
...\otimes \mathrm{M}_{1}^{(S_{1})}(\mathrm{d}\lambda _{1}^{(S_{1})})\otimes 
\mathrm{M}_{2}^{(1)}(\mathrm{d}\lambda _{2}^{(1)})\otimes ...\otimes \mathrm{%
M}_{2}^{(S_{2})}(\mathrm{d}\lambda _{2}^{(S_{2})})\}]  \notag \\
&=&\dsum\limits_{m,l}\sqrt{\gamma _{m}}\sqrt{\gamma _{l}}\dprod%
\limits_{s_{1},s_{2}}\langle e_{m}|\mathrm{M}_{1}^{(s_{1})}(\mathrm{d}%
\lambda _{1}^{(s_{1})})|e_{l}\rangle \text{ }\langle e_{m}|\mathrm{M}%
_{2}^{(s_{2})}(\mathrm{d}\lambda _{2}^{(s_{2})})|e_{l}\rangle .  \notag
\end{eqnarray}%
Quite similarly as explained above, this implies that any $S_{1}\times S_{2}$%
-setting family of bipartite joint measurements performed on $\widetilde{%
\rho }_{sep}$ admits the LHV model, specified by the probability space $%
(\Omega ,\mathcal{F}_{\Omega },\mu _{\widetilde{\rho }_{sep}}^{\prime })$,
where variables $\omega \in \Omega $ are defined by (\ref{76}) while
distribution $\mu _{\widetilde{\rho }_{sep}}\neq \mu _{\widetilde{\rho }%
_{sep}}^{\prime }.$ The latter LHV model is not reducible to the LHV model (%
\ref{74}) of the classical type.

Thus, any $S_{1}\times S_{2}$-setting bipartite correlation experiment
performed on state $\widetilde{\rho }_{sep}$ admits at least two LHV models
not reducible to each other. The first LHV model, with the probability space
(\ref{74}) depending only on state $\widetilde{\rho }_{sep}$, holds for 
\emph{any} setting $S_{1}\times S_{2}.$ The second LHV model, with the
probability space $(\Omega ,\mathcal{F}_{\Omega },\mu _{\widetilde{\rho }%
_{sep}}^{\prime })$, is constructed specifically for a given setting $%
S_{1}\times S_{2}$.

In view of this analysis, we introduce the following notions.

\begin{definition}
An $N$-partite quantum state $\rho $ admits an $S_{1}\times ...\times S_{N}$%
-setting LHV description if any $S_{1}\times ...\times S_{N}$-setting family
of EPR local $N$-partite joint measurements performed on this quantum state
admits an LHV\ model formulated by definition 4.
\end{definition}

This definition and the LHV property 1 (specified in section 3 after
definition 5) imply the following statements on a LHV description of an
arbitrary $N$-partite quantum state.

\begin{proposition}
Let an $N$-partite quantum state $\rho $ on $\mathcal{H}_{1}\otimes
...\otimes \mathcal{H}_{N}$ admit an $S_{1}\times ...\times S_{N}$-setting
LHV description. Then:\medskip \newline
(i) $\rho $ admits any $K_{1}\times ...\times K_{N}$-setting LHV description
where $K_{1}\leq S_{1},$ $...,$ $K_{N}\leq S_{N};$\medskip \newline
(ii) for any sites $1\leq n_{1}<...<n_{M}\leq N,$ where $1\leq M<N,$ the
reduced $M$-partite state $\rho _{(n_{1},...,n_{M})}$ on $\mathcal{H}%
_{n_{1}}\otimes ...\otimes \mathcal{H}_{n_{M}}$ admits the $S_{n_{1}}\times
...\times S_{n_{_{M}}}$-setting LHV description.
\end{proposition}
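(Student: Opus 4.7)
The plan is to derive both statements directly from Definition 6 (what it means for $\rho$ to admit an $S_1 \times \ldots \times S_N$-setting LHV description) combined with LHV property 1 listed right after Definition 5. Recall that LHV property 1 asserts: if an $S_1 \times \ldots \times S_N$-setting family of $N$-partite joint measurements admits an LHV model, then so does any $K_1 \times \ldots \times K_N$-setting subfamily (with $K_n \leq S_n$) as well as any induced $S_{n_1} \times \ldots \times S_{n_M}$-setting family of $M$-partite joint measurements at sites $n_1 < \ldots < n_M$. Each of (i) and (ii) is reduced to producing, from the given data, an ambient EPR local family on $\rho$ to which LHV property 1 applies.

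For (i), I would start from an arbitrary $K_1 \times \ldots \times K_N$-setting family of EPR local joint measurements on $\rho$, specified by POV measures $\{\mathrm{M}_n^{(s_n)} : s_n = 1, \ldots, K_n\}$ at each site $n$. Completing it to an $S_1 \times \ldots \times S_N$-setting family is immediate: at each site $n$ pick any additional POV measures $\mathrm{M}_n^{(s_n)}$ for $s_n = K_n+1,\ldots,S_n$ (such objects always exist, e.g.\ one may even take the trivial one-outcome measure $\mathrm{M}(\cdot) = I$). The enlarged family is still EPR local, since all its joint distributions retain the product form (\ref{24}), and is still performed on $\rho$. By the hypothesis and Definition 6 it admits an LHV model, and LHV property 1 then transfers this LHV model to the original $K_1 \times \ldots \times K_N$-setting subfamily.

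For (ii), I would take any $S_{n_1} \times \ldots \times S_{n_M}$-setting family of EPR local $M$-partite joint measurements on $\rho_{(n_1,\ldots,n_M)}$, given by POV measures $\{\mathrm{M}_{n_i}^{(s_{n_i})}\}$ on $\mathcal{H}_{n_1},\ldots,\mathcal{H}_{n_M}$. I would extend it to an EPR local $N$-partite family on the full state $\rho$ by adding, at each omitted site $k \notin \{n_1,\ldots,n_M\}$, arbitrary POV measures $\mathrm{M}_k^{(s_k)}$ for $s_k = 1,\ldots,S_k$. By hypothesis this extended family on $\rho$ admits an LHV model. The key identity is that, using the partial-trace formula $\mathrm{tr}[\rho(A_{n_1,\ldots,n_M} \otimes I_{\text{rest}})] = \mathrm{tr}[\rho_{(n_1,\ldots,n_M)} A_{n_1,\ldots,n_M}]$ together with the normalization of each $\mathrm{M}_k^{(s_k)}$, the marginal of the extended $N$-partite joint distribution over the outcomes at the omitted sites coincides with the original $M$-partite joint distribution on $\rho_{(n_1,\ldots,n_M)}$. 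Thus the original $M$-partite family is precisely the $M$-partite subfamily induced from the extended $N$-partite family, and LHV property 1 produces the desired LHV model for it.

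The only non-trivial point is the marginal/partial-trace identity invoked in (ii); but it follows at once from the definitions of the reduced density operator and of a normalized POV measure, so no real obstacle arises in the argument.
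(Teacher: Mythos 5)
Your proof is correct and takes essentially the same approach as the paper: the paper gives no explicit argument for this proposition, asserting only that it ``follows from definition 6 and the LHV property 1,'' and your completion of a $K_1\times ...\times K_N$-setting family to an $S_1\times ...\times S_N$-setting one (resp.\ extension of an $M$-partite family to an $N$-partite one, with the partial-trace identity $\mathrm{tr}[\rho (A\otimes I)]=\mathrm{tr}[\rho_{(n_1,...,n_M)}A]$) is exactly the routine detail that justification leaves implicit. Nothing further is needed.
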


We stress that an $N$-partite quantum state $\rho ,$ admitting the $%
K_{1}\times ...\times K_{N}$-setting LHV description, does not need to admit
an $S_{1}\times ...\times S_{N}$-setting LHV description with $%
S_{1}>K_{1},...,S_{N}>K_{N}.$

\begin{definition}
An $N$-partite quantum state $\rho $ is said to admit an LHV model of
Werner's type if \emph{any} setting\emph{\ }family of\emph{\ }EPR local $N$%
-partite joint measurements performed on this state admits one and the same
LHV model formulated by definition 4.
\end{definition}

Any separable state admits an LHV model of Werner's type. For a bipartite
case, this model is specified by (\ref{74}). The nonseparable Werner state
[8] $W_{d,\Phi }$ on $\mathbb{C}^{d}\otimes \mathbb{C}^{d},$ $d\geq 2,$ with
parameter $\Phi \geq -1+\frac{d+1}{d^{2}},$ admits [8] an LHV model of
Werner's type under any projective measurements of two parties.

From definitions 6, 7 it follows that if an $N$-partite quantum state $\rho $
admits an LHV model of Werner's type then it admits an LHV description for
any setting $S_{1}\times ...\times S_{N}.$ However, the converse of this
statement is not true and even if an $N$-partite quantum state $\rho $
admits an LHV description for any setting $S_{1}\times ...\times S_{N}$,
this does not imply that this $\rho $ admits an LHV model of Werner's type -
since for each concrete setting $S_{1}\times ...\times S_{N}$, a probability
space may depend not only a state $\rho $ but also on performed measurements.

From definition 6 and proposition 2 in section 3 it follows the following
statement.

\begin{proposition}
An arbitrary $N$-partite quantum state $\rho $ admits an $S_{1}\times 
\underset{N-1}{\underbrace{1\times ...\times 1}}$-setting LHV description
for any $S_{1}\geq 2.$
\end{proposition}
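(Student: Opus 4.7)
The plan is to recognize this proposition as essentially a direct corollary of Proposition 2 combined with Definition 6, routed through the chain of implications in (\ref{19}). Specifically, I would unpack Definition 6: to show that $\rho$ admits an $S_{1}\times 1\times\dots\times 1$-setting LHV description, I need to take an \emph{arbitrary} $S_{1}\times 1\times\dots\times 1$-setting family $\mathcal{E}$ of EPR local $N$-partite joint measurements performed on $\rho$, and exhibit an LHV model in the sense of Definition 4. The whole point is that Proposition 2 already delivers such a model for \emph{any} $S_{1}\times 1\times\dots\times 1$-setting family satisfying the nonsignaling condition (\ref{11}), irrespective of whether the underlying systems are classical, quantum, or of any other nature.

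First I would fix such a family $\mathcal{E}=\{(s_{1},1,\dots,1)\mid s_{1}=1,\dots,S_{1}\}$ with joint distributions $P^{(\mathcal{E})}_{(s_{1},1,\dots,1)}(\,\cdot\mid\rho)$. By hypothesis each of these joint measurements is EPR local, so by Definition 2 the joint distributions and all their marginals depend only on the settings at the corresponding sites; in particular the marginal over $\Lambda_{1}^{(s_{1})}$ coincides for all $s_{1}$, which is precisely the nonsignaling condition (\ref{11}) specialized to this setting. This is the step encoded in the second implication of (\ref{19}), and it uses only the general probabilistic content of EPR locality, not any feature of the quantum formalism.

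Having verified nonsignaling, I would invoke Proposition 2 directly: it yields an LHV model (\ref{27}) for $\mathcal{E}$, constructed via the Radon--Nikodym decomposition $P^{(\mathcal{E})}_{(s_{1},1,\dots,1)}(\mathrm{d}\lambda_{1}^{(s_{1})}\times\mathrm{d}\lambda^{\prime})=\alpha^{(\mathcal{E})}_{s_{1}}(\mathrm{d}\lambda_{1}^{(s_{1})}\mid\lambda^{\prime})\,\tau^{(\mathcal{E})}(\mathrm{d}\lambda^{\prime})$ together with the joint distribution (\ref{49}) whose existence is guaranteed because the probability spaces on the $(N-1)$-partite marginal side all coincide. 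Since $\mathcal{E}$ was arbitrary, Definition 6 is satisfied and the proposition follows.

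There is essentially no obstacle: the only conceptual step is recognizing that Definition 6 quantifies over EPR local measurement families, and EPR locality is strictly stronger than nonsignaling, so the hypothesis of Proposition 2 is automatic. Nothing in the argument uses the tensor-product structure (\ref{24}) of the quantum model beyond the fact that EPR local joint measurements on $\rho$ produce distributions of that form, which already satisfy (\ref{12}) and hence (\ref{11}). Thus the result holds uniformly over all $N$-partite quantum states, separable or not, pure or mixed, finite- or infinite-dimensional.
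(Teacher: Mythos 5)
Your proposal is correct and follows exactly the route the paper intends: the paper offers no separate proof, stating only that the result "follows from definition 6 and proposition 2," which is precisely your argument of noting that EPR locality implies nonsignaling (the chain (19)) and then invoking Proposition 2 for the arbitrary $S_{1}\times 1\times\dots\times 1$-setting family. Your write-up simply makes explicit the steps the paper leaves to the reader.
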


Consider now a convex combination of $N$-partite quantum states admitting an
LHV\ description for a definite $S_{1}\times ...\times S_{N}$-setting.

\begin{proposition}
Let each of quantum states $\rho _{1},...,\rho _{M}$ on $\mathcal{H}%
_{1}\otimes ...\otimes \mathcal{H}_{N}$ admit an $S_{1}\times ...\times
S_{N} $-setting LHV description. Then any their convex combination: 
\begin{equation}
\sum_{m}\gamma _{m}\rho _{m},\text{ \ \ \ }\gamma _{m}\geq 0,\text{ \ \ \ }%
\tsum\limits_{m}\gamma _{m}=1,  \label{80}
\end{equation}%
also admits the $S_{1}\times ...\times S_{N}$-setting LHV description.
\end{proposition}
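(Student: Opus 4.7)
The plan is to reduce this to pure linearity of the quantum Born rule plus a standard disjoint-union construction of probability spaces. Fix any $S_1\times\ldots\times S_N$-setting family $\mathcal{E}$ of EPR local $N$-partite joint measurements, described (by (\ref{24})) in terms of POV measures $\mathrm{M}_n^{(s_n)}$. For the convex combination $\rho=\sum_m\gamma_m\rho_m$, the joint probability distributions factor as
\begin{equation*}
P^{(\mathcal{E})}_{(s_1,\ldots,s_N)}(\,\cdot\, |\,\rho)=\sum_m\gamma_m\, P^{(\mathcal{E})}_{(s_1,\ldots,s_N)}(\,\cdot\, |\,\rho_m),
\end{equation*}
by linearity of the trace applied to (\ref{24}). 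This is the key identity enabling the reduction.

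By hypothesis, for each $m$ the family $\mathcal{E}$ performed on $\rho_m$ admits an LHV model in the sense of definition 4: there exists a probability space $(\Omega_m,\mathcal{F}_{\Omega_m},\nu_{\mathcal{E},m})$ and conditional distributions $P^{(s_n)}_{n,m}(\,\cdot\,|\omega)$, each depending only on the setting $s_n$ at $n$-th site, with
\begin{equation*}
P^{(\mathcal{E})}_{(s_1,\ldots,s_N)}(\mathrm{d}\lambda_1^{(s_1)}\times\ldots\times\mathrm{d}\lambda_N^{(s_N)}\,|\,\rho_m)=\dint_{\Omega_m}P^{(s_1)}_{1,m}(\mathrm{d}\lambda_1^{(s_1)}|\omega)\cdot\ldots\cdot P^{(s_N)}_{N,m}(\mathrm{d}\lambda_N^{(s_N)}|\omega)\,\nu_{\mathcal{E},m}(\mathrm{d}\omega).
\end{equation*}

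Next I would glue these models into a single one. Form the disjoint union $\Omega:=\bigsqcup_m\Omega_m$ with the natural product $\sigma$-algebra, and define the probability measure $\nu_\mathcal{E}$ on $\Omega$ by $\nu_\mathcal{E}(A):=\sum_m\gamma_m\,\nu_{\mathcal{E},m}(A\cap\Omega_m)$; equivalently, one enlarges $\Omega_m$ to $\{m\}\times\Omega_m$ and sets $\nu_\mathcal{E}=\sum_m\gamma_m\,\delta_m\otimes\nu_{\mathcal{E},m}$, which is automatically a probability measure since $\sum_m\gamma_m=1$. On each component $\Omega_m$ set $P^{(s_n)}_n(\,\cdot\,|\omega):=P^{(s_n)}_{n,m}(\,\cdot\,|\omega)$ for $\omega\in\Omega_m$; crucially this definition still depends only on the setting at $n$-th site, as required by definition 4. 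Splitting the integral over $\Omega$ into a sum of integrals over the $\Omega_m$ and applying the convex-combination identity yields exactly the LHV representation (\ref{27}) for $P^{(\mathcal{E})}_{(s_1,\ldots,s_N)}(\,\cdot\,|\rho)$.

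Since the family $\mathcal{E}$ was arbitrary, definition 6 then gives the $S_1\times\ldots\times S_N$-setting LHV description of $\sum_m\gamma_m\rho_m$. There is no real obstacle here: the only point that needs attention is checking that the new conditional distributions $P^{(s_n)}_n(\,\cdot\,|\omega)$ indeed depend solely on the $n$-th site setting, which is inherited componentwise, and that measurability with respect to $(\Omega,\mathcal{F}_\Omega)$ is preserved by the disjoint-union construction. Both are routine; the substantive content is just the linearity in $\rho$ of the quantum joint distributions and the fact that an LHV probability space may be chosen differently for each $\rho_m$ without any compatibility requirement.
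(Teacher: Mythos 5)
Your proof is correct, and it takes a somewhat different route from the paper's. Both arguments pivot on the same key identity, the linearity of the Born rule giving $P_{(s_1,\ldots,s_N)}(\,\cdot\,|\sum_m\gamma_m\rho_m)=\sum_m\gamma_m P_{(s_1,\ldots,s_N)}(\,\cdot\,|\rho_m)$, but they differ in how the individual LHV models are combined. The paper does not touch the hidden-variable spaces $(\Omega_m,\mathcal{F}_{\Omega_m},\nu_{\mathcal{E},m})$ at all: it invokes the equivalence $\mathrm{(a)}\Leftrightarrow\mathrm{(c)}$ of theorem 1 to replace each LHV model for $\rho_m$ by a joint probability distribution $\mu_m$ on the full outcome product space $\Lambda_1^{(1)}\times\ldots\times\Lambda_N^{(S_N)}$ returning all the $P_{(s_1,\ldots,s_N)}(\,\cdot\,|\rho_m)$ as marginals, observes that $\sum_m\gamma_m\mu_m$ then returns all the $P_{(s_1,\ldots,s_N)}(\,\cdot\,|\eta)$ as marginals, and applies $\mathrm{(c)}\Rightarrow\mathrm{(a)}$. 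You instead build the LHV representation (\ref{27}) for the mixture directly, by forming the weighted disjoint union $\bigsqcup_m\Omega_m$ with measure $\sum_m\gamma_m\,\nu_{\mathcal{E},m}$ and defining the conditional distributions componentwise; your check that these still depend only on the local setting (the component index $m$ being carried by $\omega$ itself) is the right point to flag, and it goes through. Your construction is more self-contained, since it does not rely on theorem 1, and it produces a hidden-variable space that transparently reflects the convex structure; the paper's version is shorter given that theorem 1 is already in hand, and it yields a model living on the canonical outcome product space. Both are valid; the only cosmetic slip is calling the disjoint-union $\sigma$-algebra a ``product'' $\sigma$-algebra, which your subsequent reformulation via $\{m\}\times\Omega_m$ already repairs.
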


\begin{proof}
Suppose that every state $\rho _{m}$ on $\mathcal{H}_{1}\otimes ...\otimes 
\mathcal{H}_{N}$ admits an $S_{1}\times ...\times S_{N}$-setting LHV
description. Then, by definition 6 and theorem 1, for any $S_{1}\times
...\times S_{N}$-setting family of $N$-partite joint measurements (\ref{24}%
), performed on $\rho _{m}$ and specified by POV measures $\mathrm{M}%
_{n}^{(s_{n})},$ $\forall s_{n},$ $\forall n,$ there exists a joint
probability distribution 
\begin{equation}
\mu _{m}(\mathrm{d}\lambda _{1}^{(1)}\times ...\times \mathrm{d}\lambda
_{1}^{(S_{1})}\times ...\times \mathrm{d}\lambda _{N}^{(1)}\times ...\times 
\mathrm{d}\lambda _{N}^{(S_{N})}),  \label{81}
\end{equation}%
returning all 
\begin{eqnarray}
&&P_{(s_{1},...,s_{N})}(\mathrm{d}\lambda _{1}^{(s_{1})}\times ...\times 
\mathrm{d}\lambda _{N}^{(s_{_{N}})}\text{ }|\text{ }\rho _{m})  \label{82} \\
&=&\mathrm{tr}[\rho _{m}\{\mathrm{M}_{1}^{(s_{1})}(\mathrm{d}\lambda
_{1}^{(s_{1})})\otimes ...\otimes \mathrm{M}_{N}^{(s_{_{N}})}(\mathrm{d}%
\lambda _{N}^{(s_{_{N}})})\}],\text{ \ }%
s_{1}=1,...,S_{1},...,s_{N}=1,...,S_{N},  \notag
\end{eqnarray}%
as marginals. This implies that, for a mixture $\eta =\sum_{m}\gamma
_{m}\rho _{m},$ every 
\begin{eqnarray}
&&P_{(s_{1},...,s_{N})}(\mathrm{d}\lambda _{1}^{(s_{1})}\times ...\times 
\mathrm{d}\lambda _{N}^{(s_{_{N}})}\text{ }|\text{ }\eta )  \label{83} \\
&=&\dsum\limits_{m}\gamma _{m}\mathrm{tr}[\rho _{m}\{\mathrm{M}%
_{1}^{(s_{1})}(\mathrm{d}\lambda _{1}^{(s_{1})})\otimes ...\otimes \mathrm{M}%
_{N}^{(s_{_{N}})}(\mathrm{d}\lambda _{N}^{(s_{_{N}})})\}]  \notag
\end{eqnarray}%
constitutes the corresponding marginal of distribution $\sum_{m}\gamma
_{m}\mu _{m}.$ Therefore, by item (c) of theorem 1, any $S_{1}\times
...\times S_{N}$-setting family of $N$-partite joint measurements on state $%
\sum_{m}\gamma _{m}\rho _{m}$ admits an LHV model. By definition 6, the
latter means that state $\eta _{\beta }$ admits the $S_{1}\times ...\times
S_{N}$-setting LHV description.
\end{proof}

\bigskip

In the following statement, proved in appendix, we establish a threshold
bound for an arbitrary noisy bipartite state to admit an $S_{1}\times S_{2}$%
-setting LHV description. In an $S_{1}\times 1$-setting (or $1\times S_{2}$%
-setting) case, this bound is consistent with the statement of proposition 4.

\begin{proposition}
Let a bipartite quantum state $\rho $ on $\mathbb{C}^{d_{1}}\otimes \mathbb{C%
}^{d_{2}},$ $d_{1},d_{2}\geq 2,$ do not admit the LHV description for a
given setting $S_{1}\times S_{2}$. The noisy state 
\begin{equation}
\eta _{\rho }(\gamma )=(1-\gamma )\frac{I_{\mathbb{C}^{d_{1}}\otimes \mathbb{%
C}^{d_{2}}}}{d_{1}d_{2}}+\gamma \rho ,\text{ \ \ \ \ \ }0\leq \gamma \leq
(1+\beta _{\rho })^{-1},  \label{84}
\end{equation}%
admits the $S_{1}\times S_{2}$-setting LHV description under any \emph{%
generalized} EPR local quantum measurements of two parties. In (\ref{84}), 
\begin{equation}
\beta _{\rho }=\min \left\{ d_{1}(S_{2}-1)||\tau _{\rho }^{(1)}||;\text{ }%
d_{2}(S_{1}-1)||\tau _{\rho }^{(2)}||\right\}  \label{85}
\end{equation}%
and $||\tau _{_{\rho }}^{(1)}||,$ $||\tau _{\rho }^{(2)}||$ are operator
norms of the reduced states $\tau _{\rho }^{(1)}=\mathrm{tr}_{\mathbb{C}%
^{d_{2}}}[\rho ]$ and $\tau _{\rho }^{(2)}=\mathrm{tr}_{\mathbb{C}%
^{d_{1}}}[\rho ]$ on $\mathbb{C}^{d_{1}}$ and $\mathbb{C}^{d_{2}},$
respectively.
\end{proposition}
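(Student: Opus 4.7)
The plan is to apply statement \textrm{(c)} of theorem 1: for any family of generalized EPR local bipartite quantum measurements on $\eta_\rho(\gamma)$, specified by POV measures $\mathrm{M}_1^{(s_1)}$ and $\mathrm{M}_2^{(s_2)}$, I will construct a single joint probability measure $\mu_{\mathcal{E}}$ on the product of all outcome spaces that returns every bipartite joint distribution $P_{(s_1,s_2)}(\cdot\mid\eta_\rho(\gamma))$ as its $(s_1,s_2)$-marginal. By the implication \textrm{(c)}$\Rightarrow$\textrm{(a)} of theorem 1 this yields an $S_1\times S_2$-setting LHV model of $\eta_\rho(\gamma)$, hence the LHV description in the sense of definition 6. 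The construction stitches together $S_2$ ``one-sided'' LHV models: for each fixed $k\in\{1,\dots,S_2\}$, proposition 4 applied to $\rho$ (combined with statement \textrm{(c)} of theorem 1) supplies a joint distribution $\nu^{(k)}$ on $\Lambda_1^{(1)}\times\cdots\times\Lambda_1^{(S_1)}\times\Lambda_2^{(b_k)}$ returning every $P_{(a_i,b_k)}(\cdot\mid\rho)$, $i=1,\dots,S_1$, as a marginal.

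Concretely, I take
\[
\mu_{\mathcal{E}}=\gamma\sum_{k=1}^{S_2}\nu^{(k)}\prod_{j\neq k}\frac{\mathrm{tr}[\mathrm{M}_2^{(b_j)}(\mathrm{d}\lambda_2^{(b_j)})]}{d_2}+w_0\prod_{i=1}^{S_1}\mathrm{tr}[\sigma_1\mathrm{M}_1^{(a_i)}(\mathrm{d}\lambda_1^{(a_i)})]\prod_{j=1}^{S_2}\frac{\mathrm{tr}[\mathrm{M}_2^{(b_j)}(\mathrm{d}\lambda_2^{(b_j)})]}{d_2},
\]
with a state $\sigma_1$ on $\mathbb{C}^{d_1}$ and a nonnegative scalar $w_0$ to be determined. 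Computing the $(\lambda_1^{(a_i)},\lambda_2^{(b_k)})$-marginal: the slice $k'=k$ contributes $\gamma P_{(a_i,b_k)}(\cdot\mid\rho)$; each of the $S_2-1$ slices with $k'\neq k$ contributes $\gamma\,\mathrm{tr}[\tau_\rho^{(1)}\mathrm{M}_1^{(a_i)}]\cdot\mathrm{tr}[\mathrm{M}_2^{(b_k)}]/d_2$ (the factor $\tau_\rho^{(1)}$ arising as the EPR local side-1 marginal of $\nu^{(k')}$); and the filling piece contributes $w_0\,\mathrm{tr}[\sigma_1\mathrm{M}_1^{(a_i)}]\cdot\mathrm{tr}[\mathrm{M}_2^{(b_k)}]/d_2$. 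Matching this against the required marginal $P_{(a_i,b_k)}(\cdot\mid\eta_\rho(\gamma))=\gamma P_{(a_i,b_k)}(\cdot\mid\rho)+(1-\gamma)\mathrm{tr}[\mathrm{M}_1^{(a_i)}]/d_1\cdot\mathrm{tr}[\mathrm{M}_2^{(b_k)}]/d_2$, for every POV element $\mathrm{M}_1^{(a_i)}$, forces the operator identity
\[
w_0\sigma_1=(1-\gamma)\frac{I_{\mathbb{C}^{d_1}}}{d_1}-(S_2-1)\gamma\,\tau_\rho^{(1)},
\]
whose trace is $w_0=1-S_2\gamma$, so the total mass of $\mu_{\mathcal{E}}$ comes out automatically to $S_2\gamma+w_0=1$.

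The quantitative threshold then comes entirely from requiring the right-hand side of this identity to be positive semidefinite, which is equivalent to $(1-\gamma)/d_1\geq(S_2-1)\gamma\|\tau_\rho^{(1)}\|$, i.e.\ to $\gamma\leq(1+d_1(S_2-1)\|\tau_\rho^{(1)}\|)^{-1}$; since $d_1\|\tau_\rho^{(1)}\|\geq 1$, this condition already forces $w_0\geq 0$. Swapping the roles of the two sites (filling on side 2 and building the companion family from the dual $1\times S_2$ slices of proposition 4) produces the symmetric threshold $\gamma\leq(1+d_2(S_1-1)\|\tau_\rho^{(2)}\|)^{-1}$. Either construction suffices, so the admissible range is the union of the two intervals, namely $\gamma\leq(1+\beta_\rho)^{-1}$ with $\beta_\rho$ as in (85). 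The main obstacle is guessing the correct ansatz: the $S_2-1$ ``wrong-slice'' pieces pollute the side-1 marginal with $\tau_\rho^{(1)}$ instead of the target $I_{\mathbb{C}^{d_1}}/d_1$, and one must introduce exactly one fully-product compensation term whose side-1 state $\sigma_1$ is chosen so that $(S_2-1)\gamma\tau_\rho^{(1)}+w_0\sigma_1$ precisely reconstitutes $(1-\gamma)I_{\mathbb{C}^{d_1}}/d_1$; it is this forced operator identity that transfers the threshold into the operator norm of $\tau_\rho^{(1)}$ and explains why $\eta_\rho(\gamma)$ with $\gamma\leq(1+\beta_\rho)^{-1}$, not $\rho$ itself, is what the argument handles.
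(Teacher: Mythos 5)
Your proof is correct, and it arrives at exactly the same threshold through what is essentially the same decomposition as the paper, but executed with different machinery. The paper works at the operator level: it introduces a ``source operator'' $T_{\blacktriangleright}(\gamma)$ on $\mathbb{C}^{d_{1}}\otimes(\mathbb{C}^{d_{2}})^{\otimes S_{2}}$ --- a dilation of $\eta_{\rho}(\gamma)$ on Bob's side only --- consisting of the symmetrized placement of $\rho$ over the $S_{2}$ Bob-slots plus the compensating term $\bigl[(1-\gamma)I_{\mathbb{C}^{d_{1}}}/d_{1}-\gamma(S_{2}-1)\tau_{\rho}^{(1)}\bigr]\otimes I^{\otimes S_{2}}/d_{2}^{S_{2}}$, proves its positivity under exactly your operator condition, and then invokes theorem 2 (compatible one-sided extensions $\mu_{\blacktriangleright}^{(s_{1})}$) to get the LHV model; Alice's side is never dilated, because theorem 2 does that work. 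You instead dilate on both sides at the level of classical probability measures: each $\nu^{(k)}$ (supplied by proposition 4 via the Radon--Nikodym argument of proposition 2) already extends over all of Alice's settings, so you can feed a single global measure $\mu_{\mathcal{E}}$ directly into theorem 1(c) and bypass theorem 2 and the source-operator formalism entirely. The compensating product term with $w_{0}\sigma_{1}=(1-\gamma)I_{\mathbb{C}^{d_{1}}}/d_{1}-(S_{2}-1)\gamma\tau_{\rho}^{(1)}$ is the exact measure-level shadow of the paper's first-plus-third operator terms, and the positivity condition, the role of $\|\tau_{\rho}^{(1)}\|$, the automatic normalization $w_{0}=1-S_{2}\gamma\geq 0$, and the union of the two one-sided intervals giving the $\min$ in $\beta_{\rho}$ all coincide. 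What your route buys is self-containment --- it needs no appeal to the external existence result for source operators (the paper's reference [24]) and only the statements already proved in section 4; what the paper's route buys is a genuinely quantum object (a density operator on an extended Hilbert space) that is independent of the chosen POV measures and plugs into the broader source-operator program of [23, 24].
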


\bigskip

As an example, let us specify bound (\ref{84}) for the noisy state 
\begin{equation}
\eta _{\psi }^{(d)}(\gamma )=(1-\gamma )\frac{I_{\mathbb{C}^{d}\otimes 
\mathbb{C}^{d}}}{d^{2}}+\gamma |\psi \rangle \langle \psi |,  \label{86}
\end{equation}%
on $\mathbb{C}^{d}\otimes \mathbb{C}^{d},$ $d\geq 2,$ induced by the
maximally entangled pure state $\psi =\frac{1}{\sqrt{d}}\sum_{m=1}^{d}e_{m}%
\otimes e_{m}$, where $\{e_{m}\}$ is an orthonormal basis in $\mathbb{C}%
^{d}. $

In this case, $||\tau _{|\psi \rangle \langle \psi |}^{(n)}||=\frac{1}{d},$ $%
n=1,2,$ and substituting this into (\ref{84}), we conclude that state $\eta
_{|\psi \rangle \langle \psi |}^{(d)}(\gamma )$ admits an $S_{1}\times S_{2}$%
-setting LHV description under any \emph{generalized }quantum measurements
of two parties whenever%
\begin{equation}
0\leq \gamma \leq \frac{1}{1+\underset{n=1,2}{\min }(S_{n}-1)}.  \label{87}
\end{equation}%
Note that the partial transpose of $\eta _{|\psi \rangle \langle \psi
|}^{(d)}(\gamma )$ has the eigenvalue $\frac{1-\gamma (d+1)}{d^{2}}$, which
is negative for any $\gamma >\frac{1}{d+1}.$ Therefore, due to the Peres
separability criterion [21], state $\eta _{|\psi \rangle \langle \psi
|}^{(d)}(\gamma )$ is nonseparable for any $\gamma \in (\frac{1}{d+1},1].$
Thus, for state (\ref{86}), bound (\ref{87}) is nontrivial whenever $%
\min_{n=1,2}(S_{n}-1)$ $<d.$

\section{Conclusions}

In the present paper, we introduce a general framework for the probabilistic
description of a multipartite correlation scenario with an arbitrary number
of settings and any spectral type of outcomes at each site. This allows us:

\begin{enumerate}
\item[$\bullet $] To specify in probabilistic terms the difference between
nonsignaling [17], the EPR locality [1] and Bell's locality [4, 5] and to
show that, in contrast to the opinion of Bell [3, 5]:

(i) the EPR paradox [1] cannot be, in principle, resolved via the violation
of Bell's locality since the latter type of locality is only sufficient but
not necessary for the type of locality meant by Einstein, Podolsky and Rosen
in [1] - the EPR locality;

(ii) the EPR locality is not violated\footnote{%
See also our discussion in [12].} under a multipartite correlation
experiment on spacelike separated quantum particles and the so-called
"quantum nonlocality" does not constitute the violation of locality of
quantum interactions;

\item[$\bullet $] To introduce the notion of an LHV model for an $%
S_{1}\times ...\times S_{N}$-setting $N$-partite correlation experiment with
outcomes of any spectral type, discrete or continuous, and to stress that
the same correlation experiment may admit several LHV models and that the
existence of an LHV model of a general type is necessarily linked with only
nonsignaling but does not need to imply the EPR locality and even Bell's
locality;

\item[$\bullet $] To prove general statements on an LHV simulation of an
arbitrary $S_{1}\times ...\times S_{N}$-setting $N$-partite correlation
experiment. These statements not only generalize to an arbitrary
multipartite case, with outcomes of any spectral type, discrete or
continuous, the necessary and sufficient conditions introduced by Fine [7]
for a $2\times 2$-setting case, with two outcomes per site, but also
establish the equivalence between the existence of an LHV model for joint
probability distributions and the existence of the LHV-form representation
for the product expectations of the special type;

\item[$\bullet $] To introduce the notion of an $N$-partite quantum state
admitting an $S_{1}\times ...\times S_{N}$-setting LHV description; to prove
the main general statements on this notion and to establish its relation to
Werner's concept [8] of an LHV model for a multipartite quantum state;

\item[$\bullet $] To evaluate a threshold visibility for an arbitrary noisy
bipartite quantum state to admit an $S_{1}\times S_{2}$-setting LHV
description.
\end{enumerate}

In the sequel [25] to this paper, for an $S_{1}\times ...\times S_{N}$%
-setting $N$-partite correlation experiment with outcomes of any spectral
type, discrete or continuous, we introduce a single general representation
incorporating in a unique manner all Bell-type inequalities (on either joint
probabilities or correlation functions) that have been introduced in the
literature ever since the seminal publication [4] of Bell on the original
Bell inequality.

\section{Appendix}

Consider the proof of proposition 6 in section 5. For the $2\times 2$%
-setting case, this proof is similar to our proof of theorem 1 in [22].

According to definition 6, in order to prove that state $\eta _{\rho
}(\gamma )$ admits an $S_{1}\times S_{2}$-setting LHV description, we need
to show that any $S_{1}\times S_{2}$-setting family of bipartite joint
quantum measurements performed on $\eta _{\rho }(\gamma )$ admits an LHV
model.

Let, at each site, quantum measurements be described by POV measures $%
\mathrm{M}_{1}^{(s_{1})}(\mathrm{d}\lambda _{1}^{(s_{1})}),$ $%
s_{1}=1,...,S_{1},$ and $\mathrm{M}_{2}^{(s_{_{2}})}(\mathrm{d}\lambda
_{2}^{(s_{_{2}})}),$ $s_{2}=1,...,S_{2}.$ From formula (\ref{24}) it follows
that distributions $P_{(s_{1},.s_{2})}(\cdot |\eta _{\rho })$ have the form%
\begin{eqnarray}
P_{(s_{1},.s_{2})}(\mathrm{d}\lambda _{1}^{(s_{1})}\times \mathrm{d}\lambda
_{2}^{(s_{_{2}})}\text{ }|\text{ }\eta _{\rho }) &=&\mathrm{tr}[\eta _{\rho
}\{\mathrm{M}_{1}^{(s_{1})}(\mathrm{d}\lambda _{1}^{(s_{1})})\otimes \mathrm{%
M}_{2}^{(s_{_{2}})}(\mathrm{d}\lambda _{2}^{(s_{_{2}})})\}],  \TCItag{A1} \\
s_{1} &=&1,..,S_{1},\text{ \ \ \ }s_{2}=1,...,S_{2}.  \notag
\end{eqnarray}%
For state $\eta _{\rho }$ on $\mathbb{C}^{d_{1}}\otimes \mathbb{C}^{d_{2}},$
introduce self-adjoint operators $T_{\blacktriangleright }$ on $\mathbb{C}%
^{d_{1}}\otimes (\mathbb{C}^{d_{2}})^{\otimes S_{2}}$ and $%
T_{\blacktriangleleft }$ on $(\mathbb{C}^{d_{1}})^{\otimes S_{1}}\otimes 
\mathbb{C}^{d_{2}},$ satisfying the relations:%
\begin{eqnarray}
\mathrm{tr}_{\mathbb{C}^{d_{2}}}^{(k_{1},...,k_{S_{2}-1})}[T_{%
\blacktriangleright }] &=&\eta _{\rho },\ \ \ \ \ \ \ \ 2\leq
k_{1}<..<k_{S_{2}-1}\leq 1+S_{2},  \TCItag{A2} \\
\mathrm{tr}_{\mathbb{C}^{d_{1}}}^{(j_{1},...,j_{S_{1}-1})}[T_{%
\blacktriangleleft }] &=&\eta _{\rho },\text{ \ \ \ \ \ \ \ \ }1\leq
j_{1}<...<j_{S_{1}-1}\leq S_{1}.  \TCItag{A3}
\end{eqnarray}%
Here: (i) the lower indices of operators $T_{\blacktriangleright }$ and $%
T_{\blacktriangleleft }$ indicate a direction of extension of the Hilbert
space $\mathbb{C}^{d_{1}}\otimes \mathbb{C}^{d_{2}};$ (ii) $\mathrm{tr}_{%
\mathbb{C}^{d_{2}}}^{(k_{1},...,k_{S_{2}-1})}[\cdot ]$ denotes the partial
trace over elements of $\mathbb{C}^{d_{2}}$, standing in $k_{1}$-th, $...,$ $%
k_{S_{2}-1}$-th places in tensor products in $\mathbb{C}^{d_{1}}\otimes (%
\mathbb{C}^{d_{2}})^{\otimes S_{2}}$. Similarly, for the partial trace $%
\mathrm{tr}_{\mathbb{C}^{d_{1}}}^{(j_{1},...,j_{S_{1}-1})}[\cdot ].$

As we prove in [24], for any bipartite quantum state, dilations $%
T_{\blacktriangleright }$ and $T_{\blacktriangleleft }$ exist. In [23, 24],
we refer to these dilations as source operators for a bipartite state. Note
that any positive source operator is a density operator.

If, for state $\eta _{\rho }(\gamma ),$ there exist \emph{density source
operators} $T_{\blacktriangleright }$ and $T_{\blacktriangleleft },$ then
the probability measures 
\begin{equation}
\mathrm{tr}[T_{\blacktriangleright }\{\mathrm{M}_{1}^{(s_{1})}(\mathrm{d}%
\lambda _{1}^{(s_{1})})\otimes \mathrm{M}_{2}^{(1)}(\mathrm{d}\lambda
_{2}^{(1)})\otimes ...\otimes \mathrm{M}_{2}^{(S_{2})}(\mathrm{d}\lambda
_{2}^{(S_{2})})\}],\text{ \ \ }s_{1}=1,...,S_{1},  \tag{A4}
\end{equation}%
and 
\begin{equation}
\mathrm{tr}[T_{\blacktriangleleft }\{\mathrm{M}_{1}^{(1)}(\mathrm{d}\lambda
_{1}^{(1)})\otimes ...\otimes \mathrm{M}_{1}^{(S_{1})}(\mathrm{d}\lambda
_{1}^{(S_{1})})\otimes \mathrm{M}_{2}^{(s_{2})}(\mathrm{d}\lambda
_{2}^{(s_{2})})\}],\text{ \ \ \ }s_{2}=1,...,S_{2},  \tag{A5}
\end{equation}%
constitute, correspondingly, distributions 
\begin{equation}
\mu _{\blacktriangleright }^{(s_{1})}(\mathrm{d}\lambda _{1}^{(s_{1})}\times 
\mathrm{d}\lambda _{2}^{(1)}\times ...\times \mathrm{d}\lambda
_{2}^{(S_{2})}),\text{ \ \ }s_{1}=1,...S_{1},  \tag{A6}
\end{equation}%
and%
\begin{equation}
\mu _{\blacktriangleleft }^{(s_{2})}(\mathrm{d}\lambda _{1}^{(1)}\times
...\times \mathrm{d}\lambda _{1}^{(S_{1})}\times \mathrm{d}\lambda
_{2}^{(s_{2})}),\text{ \ \ \ }s_{2}=1,...S_{2},  \tag{A7}
\end{equation}%
specified in theorem 2 of section 4.

Therefore, finding for state $\eta _{\rho }(\gamma )$ of a density source
operator $T_{\blacktriangleright }$ (or $T_{\blacktriangleleft }$) will
prove the existence for this state of an $S_{1}\times S_{2}$-setting LHV
description.

For a state $\rho $ standing in (\ref{84}), consider its spectral
decomposition: 
\begin{equation}
\rho =\sum_{i}\alpha _{i}|\Psi _{i}\rangle \langle \Psi _{i}|,\text{ \ \ }%
\langle \Psi _{i},\Psi _{j}\rangle =\delta _{ij}\text{, \ \ \ }\forall
\alpha _{i}>0,\text{ \ \ }\sum_{i}\alpha _{i}=1.  \tag{A8}
\end{equation}%
Let 
\begin{equation}
\Psi _{i}=\sum_{k}\Phi _{k}^{(i)}\otimes f_{k},\ \text{\ \ \ \ \ \ }%
\sum_{k}\langle \Phi _{k}^{(i)},\Phi _{k}^{(j)}\rangle =\delta _{ij}, 
\tag{A9}
\end{equation}%
be the Schmidt decomposition of eigenvector $\Psi _{i}$ with respect to an
orthonormal basis $\{f_{k}\}$ in $\mathbb{C}^{d_{2}}.$ Substituting this
into (A8), we thus derive 
\begin{equation}
\rho =\sum_{k,l=1}^{d_{2}}\rho _{kl}\otimes |f_{k}\rangle \langle f_{l}|,%
\text{ \ \ \ \ }\rho _{kl}=\sum_{i}\alpha _{i}|\Phi _{k}^{(i)}\rangle
\langle \Phi _{l}^{(i)}|.  \tag{A10}
\end{equation}%
Operators $\rho _{kk}$ are positive with $\sum_{k}\mathrm{tr}[\rho _{kk}]=1.$
Note that $\tau _{\rho }^{(1)}=\sum_{k}\rho _{kk}$ \ is the state on $%
\mathbb{C}^{d_{1}}$ reduced from $\rho .$

Introduce on $\mathbb{C}^{d_{1}}\otimes (\mathbb{C}^{d_{2}})^{\otimes S_{2}}$
the self-adjoint operator 
\begin{eqnarray}
T_{\blacktriangleright }(\gamma ) &=&(1-\gamma )\frac{I_{\mathbb{C}%
^{d_{1}}}\otimes I_{\mathbb{C}^{d_{2}}}\otimes I_{_{\mathbb{X}}}}{%
d_{1}d_{2}^{S_{2}}}\text{ }+\text{ }\gamma \sum_{k,l}\rho _{kl}\otimes \frac{%
\{|f_{k}\rangle \langle f_{l}|\otimes I_{\mathbb{X}})\}_{sym}}{%
d_{2}^{S_{2}-1}}  \TCItag{A11} \\
&&-\gamma (S_{2}-1)\tau _{\rho }^{(1)}\otimes \frac{I_{\mathbb{C}%
^{d_{2}}}\otimes I_{_{\mathbb{X}}}}{d_{2}^{S_{2}}},  \notag
\end{eqnarray}%
where $\mathbb{X}:\mathbb{=}$ $(\mathbb{C}^{d_{2}})^{\otimes (S_{2}-1)}$;
operator $\{|f_{k}\rangle \langle f_{l}|\otimes I_{_{\mathbb{X}}}\}_{sym}$
on $(\mathbb{C}^{d_{2}})^{\otimes S_{2}}$ represents the symmetrization of $%
|f_{k}\rangle \langle f_{l}|\otimes I_{_{\mathbb{X}}}$ and operators $\rho
_{kl}$ on $\mathbb{C}^{d_{1}}$ are defined by (A10). It is easy to verify
that $T_{\blacktriangleright }^{^{(1,S_{2})}}(\gamma )$ satisfies condition
(A2) and, therefore, constitutes a source operator for state $\eta _{\rho
}(\gamma ).$

In order to find $\gamma $ for which operator $T_{\blacktriangleright
}(\gamma )$ is positive, let us evaluate the sum of the first and the third
terms standing in (A11). Note that, in view of (A10), the second term in
(A11) constitutes a positive operator.

Taking into account the relation%
\begin{equation}
-\left\Vert Y\right\Vert \text{ }I_{_{\mathcal{K}}}\leq Y\leq \left\Vert
Y\right\Vert \text{ }I_{_{\mathcal{K}}},  \tag{A12}
\end{equation}%
holding for any bounded quantum observable $Y$ on a Hilbert space $\mathcal{K%
},$ we derive:%
\begin{eqnarray}
&&(1-\gamma )\frac{I_{\mathbb{C}^{d_{1}}}\otimes I_{\mathbb{C}%
^{d_{2}}}\otimes I_{_{\mathbb{X}}}}{d_{1}d_{2}^{S_{2}}}-\gamma (S_{2}-1)\tau
_{\rho }^{(1)}\otimes \frac{I_{\mathbb{C}^{d_{2}}}\otimes I_{_{\mathbb{X}}}}{%
d_{2}^{S_{2}}}  \TCItag{A13} \\
&\geq &\left[ 1-\gamma (1+d_{1}(S_{2}-1)||\tau _{\rho }^{(1)}||)\right] 
\text{ }\frac{I_{\mathbb{C}^{d_{1}}}\otimes I_{\mathbb{C}^{d_{2}}}\otimes
I_{_{\mathbb{X}}}}{d_{1}d_{2}^{S_{2}}}.  \notag
\end{eqnarray}%
Therefore, the source operator $T_{\blacktriangleright }(\gamma )$ is
positive (i.e a density source operator) for any 
\begin{equation}
0\leq \gamma \leq \left( 1+d_{1}(S_{2}-1)||\tau _{\rho }^{(1)}||\right)
^{-1}.  \tag{A14}
\end{equation}%
Quite similarly, state $\eta _{\rho }(\gamma )$ has a density source
operator $T_{\blacktriangleleft }(\gamma )$ for any 
\begin{equation}
0\leq \gamma \leq \left( 1+d_{2}(S_{1}-1)||\tau _{\rho }^{(2)}||\right)
^{-1}.  \tag{A15}
\end{equation}%
From (A14) and (A15) it follows that state $\eta _{\rho }(\gamma )$ admits
an $S_{1}\times S_{2}$-setting LHV description for any 
\begin{equation}
0\leq \gamma \leq \left( 1+\min \left\{ d_{1}(S_{2}-1)||\tau _{\rho
}^{(1)}||;\text{ }d_{2}(S_{1}-1)||\tau _{\rho }^{(2)}||\right\} \right) . 
\tag{A16}
\end{equation}%
Note that $||\tau _{\rho }^{(1)}||\geq \frac{1}{d_{1}}$ and $||\tau _{\rho
}^{(2)}||\geq \frac{1}{d_{2}}$.\bigskip

\end{document}